\begin{document}

\title{Artificial-Noise-Aided Message Authentication Codes with Information-Theoretic Security}


%
\author{Xiaofu~Wu, Zhen~Yang, Cong~Ling, and Xiang-Gen Xia
\thanks{This  work was supported in part by the National Natural Science Foundation of China under Grants 61372123, 61271335, by the Key University Science Research Project of Jiangsu Province under Grant 14KJA510003 and by the Scientific Research Foundation of Nanjing University of Posts and Telecommunications under Grant NY213002.}
\thanks{Xiaofu~Wu and Zhen~Yang are with the Key Lab of Ministry of Education in Broadband Wireless Communication and Sensor Network Technology, Nanjing University of Posts and Telecommunications, Nanjing 210003, China (e-mails:
        xfuwu@ieee.org, yangz@njupt.edu.cn)).}
\thanks{Cong Ling is with the Department of Electrical and Electronic Engineering, Imperial College London, London, UK (e-mail: cling@ieee.org).}
\thanks{Xiang-Gen~Xia is with the Department of Electrical and Computer Engineering, University of Delaware, Newark, DE 19716
 (e-mail: xxia@ee.udel.edu).}}



\maketitle

\begin{abstract}
In the past, two main approaches for the purpose of authentication, including information-theoretic authentication codes and complexity-theoretic message authentication codes (MACs), were almost independently developed. In this paper, we propose a new cryptographic primitive, namely, artificial-noise-aided MACs (ANA-MACs), which can be considered as both computationally secure and information-theoretically secure. For ANA-MACs, we introduce artificial noise to interfere with the complexity-theoretic MACs and quantization is further employed to facilitate packet-based transmission. With a channel coding formulation of key recovery in the MACs, the generation of standard authentication tags can be seen as an encoding process for the ensemble of codes, where the shared key between Alice and Bob is considered as the input and the message is used to specify a code from the ensemble of codes. Then, we show that the introduction of artificial noise in ANA-MACs can be well employed to resist the key recovery attack even if the opponent has an unlimited computing power. Finally, a pragmatic approach for the analysis of ANA-MACs is provided, and we show how to balance the three performance metrics, including the completeness error, the false acceptance probability, and the conditional equivocation about the key.  The analysis can be well applied to a class of ANA-MACs, where MACs with Rijndael cipher are employed.
\end{abstract}

\begin{keywords}
Information-theoretic authentication codes, message authentication codes, channel coding and decoding, information-theoretic security.
\end{keywords}

\IEEEpeerreviewmaketitle

\section{Introduction}

\PARstart{M}{essage} authentication codes (MACs) are cryptographic primitives used extensively in the construction of
security services, including authentication, nonrepudiation, and integrity. Basically, message authentication is to ensure that an accepted message
truly comes from its acclaimed transmitter. When the transmitter intends to send a message, it also generates a MAC, which is a function of the message and a shared key, known only to both the transmitter and the receiver. The generated MAC is often appended to the message \cite{CyptBook}.  At the receiver, a MAC is computed from the received message and compared to the MAC that is transmitted. If the two MACs are identical, then the transmitter is identified as a legal user and it is highly likely the received message is exactly equal to the one transmitted.

In the past, two main approaches, including information-theoretic authentication codes \cite{MacWilCodes,Simmons} and complexity-theoretic MACs, were almost independently developed for the purpose of authentication. In general, they differ in the assumptions about the capabilities of an opponent. Information-theoretic authentication codes, which are based on information theory, offer unconditional security, i.e., security independent of the computing power of an adversary. The complexity-theoretic approach starts from an abstract model for computation, and assumes that the opponent has limited computing power. Due to their high flexibility, the complexity-theoretic MACs find widespread applications in practice.

Complexity-theoretic MAC algorithms can be constructed from other cryptographic primitives, such as cryptographic hash functions, or block cipher algorithms. Currently, the security of MAC algorithms rely on the hardness of hash functions, i.e, given the message and its MAC, it is ``hard" to forge a MAC on a new message. This means that they can be broken if the adversary has an unlimited power of computation.

In recent years, there has been various efforts \cite{YuIFS,XiaoPHY,DanPHY,TugnaitJSAC} in authenticating the transmitter and receiver at the physical layer, based on prior coordination or secret sharing, where the sender is authenticated if the receiver can successfully demodulate and decode the transmission. In \cite{YuIFS}, a physical-layer authentication scheme was proposed, in which MACs, along with messages, are transmitted concurrently over the physical layer. Compared to the traditional transmission approach above the physical layer, the authors claim the possibility of information-theoretic security due to the presence of channel noise. However, its security often depends on the physical channel.

In this paper, we develop a new cryptographic primitive, artificial-noise-aided MACs (ANA-MACs) for ensuring information-theoretic security. The use of artificial noise in ANA-MACs makes it difficult for an opponent to derive the key. With the use of quantization, ANA-MACs can be encapsulated and transmitted in packets above the physical layer, just like the traditional MACs, which is in sharp contrast to existing physical layer authentication schemes.

It should be pointed out that the proposed ANA-MACs are also different with the binary approximate message authentication codes (AMACs) \cite{AMAC,AMAC02} and the noise-tolerant message authentication codes (NT-MACs) \cite{NTMAC}. Both AMACs and NT-MACs are designed to tolerate some channel errors during the transmission of messages. For ANA-MACs, a slight change in messages may result in a rapid change for authentication tags, as often encountered in the traditional MACs. Yet, ANA-MACs can tolerate some channel errors occurred during the transmission of tags. Furthermore, both AMACs and NT-MACs are computationally secure, while ANA-MACs may ensure some degree of information-theoretic security.

Throughout this paper, we do not discriminate the notations between scalars and vectors, which will be made clear from the contexts. For a binary vector $t$, its bipolar form is simply denoted as $\bar{t}$, in which each component takes value from $\{+1, -1\}$. To be consistent with the standard convention in algorithms and complexity theory, where the running time of an algorithm is measured as a function of the length of its input $n$, we will thus provide the adversary and the honest parties with the security parameter in unary as $1^n$ (i.e., a string of $n$ 1's) when necessary \cite{KatzBook}.

The rest of the paper is organized as follows. Some preliminaries on both information-theoretic authentication codes and MACs are made in Section-II. In Section-III, a new cryptographic primitive, ANA-MACs, is proposed and its verification mechanism is given. Then, its security analysis is formulated in Section-IV. In Section-V, we provide a pragmatic approach for analysis of ANA-MACs. Section-VI presents numerical results and the conclusion is made in Section-VII.

\newtheorem{defn}{Definition}
\newtheorem{lem}{Lemma}
\newtheorem{tem}{Theorem}

\section{Preliminary}
\subsection{Information-Theoretic (Systematic) Authentication Codes}

A systematic authentication code is a triple of $(\mathcal{S},\mathcal{T},\mathcal{K})$ of finite sets and a mapping $E: \mathcal{K} \times \mathcal{S} \rightarrow \mathcal{T}$,  where $\mathcal{S}$ is the source state space, $\mathcal{T}$ is the tag space, $\mathcal{K}$ is the key
space, and $ E(k,\cdot)\triangleq E_k: \mathcal{S} \rightarrow \mathcal{T}$ is often called an encoding rule for a given $k \in \mathcal{K}$.

Two trusting parties, Alice (or a transmitter) and Bob (or a receiver), share a secret key $k \in \mathcal{K}$. To send a piece of information
(called source state) $s \in \mathcal{S}$ to Bob, Alice
computes $t =E_k(s) \in \mathcal{T}$ and puts the message $m=(s,t)$ into a public channel. After receiving $m'=(s',t')$, Bob
will compute $E'_k(s)$ and check whether $t'=E'_k(s)$. If yes, Bob will accept it as authentic. Otherwise, Bob will
reject it.

We assume that an opponent (or Eve) has a complete understanding of the system, including the mapping $E$. The only thing she does not know is the key $k$ agreed upon by Alice and Bob, which is used to specify a particular encoding rule $E_k$.  We also assume that Eve has the ability to introduce a message into the channel. After observation of the first $r$  messages $m_1, \cdots, m_r$, Eve places her own message $m$ into the channel, attempting to make Bob accept it as authentic. This is called a spoofing attack of order $r$. In literature, there are often two different types of spoofing attack, i.e., impersonation attack and substitution attack. An impersonation attack at time $r+1$ \cite{Maurer} is just the spoofing attack of order $r$. In a so-called substitution attack at time $r$, Eve observed  $r$  messages $m_1, \cdots, m_r$ and replaces the message $m_r$ by a different message which she hopes to be accepted by Bob.

For systematic authentication codes, a source state $s$ is assumed to be public (without security) whenever $m=(s,t)$ is transmitted, which can be freely accessed by both Bob and Eve. For this reason, we simply use the authentication tag $t$ instead of a full message $m$ in what follows.

Let $K, T_{r+1}$ and $T^r$ denote the random variables describing the key, the $r+1$-th tag and a sequence of $r$ tags from time 1 to $r$, and taking values $k, t_{r+1}$ and $t^r=(t_1,\cdots,t_r)$, respectively. Let $p_r$ denote the expected probability of successful deception for a spoofing attack of order $r$ and $P_r$ the probability of successful deception if Eve can observe at most $r$ messages.  Walker \cite{Walker} proved
\begin{eqnarray}
\label{splb}
    p_r  \ge 2^{H(K|T^{r+1})-H(K|T^r)}=2^{-I(K;T_{r+1}|T^r)}
\end{eqnarray}
and Rosenbaum \cite{Rosenbaum} proved
\begin{eqnarray}
\label{pflb}
    P_r  \ge 2^{-\frac{1}{r+1}H(K)},
\end{eqnarray}
which hold even if Eve has an unlimited power of computation. If the equality in (\ref{pflb}) holds, the corresponding authentication code is called $r$-perfect.

To prevent Eve from using $t=E_k(s)$ to learn the key $k$, it should have sufficient number of solutions of $k$ for a given $t=E_k(s)$ \cite{MacWilCodes}. Given $s$ and $t$, let $\mathcal{K}(s,t)\triangleq\{k: E_k(s)=t, \forall k \in \mathcal{K}\}$ denotes the set of solutions for  $t=E_k(s)$. It follows that the successful deception probability for a given pair $(s,t)$ has a lower bound of
\begin{eqnarray}
     p_1(s,t) \ge \frac{1}{|\mathcal{K}(s,t)|}.
\end{eqnarray}

In \cite{MacWilCodes}, projective plane codes were proposed to achieve the best possible spoofing attack of order $1$, namely, $P_1=\frac{1}{\sqrt{|\mathcal{K}|}}$.  For the best authentication codes achieving the lower bound of $P_1$, it was proved that
\begin{enumerate}
\item
 $|\mathcal{K}(s_1,t_1) \bigcap \mathcal{K}(s_2,t_2)|=1$ if $s_1 \neq s_2$;
 \item
  $|\mathcal{K}(s,t)|=\sqrt{|\mathcal{K}|}$ for $\forall s\in \mathcal{S}, t \in \mathcal{T}$;
  \item
   $|\{t: |\mathcal{K}(s,t)|>0\}|=\sqrt{|\mathcal{K}|}$ for $\forall s \in \mathcal{S}$.
\end{enumerate}

However, this class of authentication codes cannot resist the spoofing attack of order $r\ge 2$.

\begin{tem}
For 1-perfect systematic authentication codes,  they cannot resist the spoofing attack of order $r\ge 2$, namely,
$p_r=1, \forall r \ge 2$.
\end{tem}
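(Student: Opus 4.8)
The plan is to show that two genuine observations with distinct source states already pin down the secret key uniquely, after which Eve can manufacture a valid tag for \emph{any} source state and thus deceive Bob with certainty, regardless of her computing power.

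First I would set up the key-recovery step. Suppose Eve has observed the genuine messages $m_1=(s_1,t_1)$ and $m_2=(s_2,t_2)$ with $s_1\neq s_2$, both produced under the true key $k^\ast$. By the definition of $\mathcal{K}(s,t)$ we have $k^\ast\in\mathcal{K}(s_1,t_1)$ and $k^\ast\in\mathcal{K}(s_2,t_2)$, so both fibers are nonempty and their intersection contains $k^\ast$. Invoking property~1) for the best (1-perfect) codes, $|\mathcal{K}(s_1,t_1)\cap\mathcal{K}(s_2,t_2)|=1$ whenever $s_1\neq s_2$, and hence $\mathcal{K}(s_1,t_1)\cap\mathcal{K}(s_2,t_2)=\{k^\ast\}$. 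Thus from two observations on distinct source states the key is recovered exactly, with no residual uncertainty, and this conclusion uses only set cardinalities and is therefore independent of any computational assumption on Eve.

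Next I would convert exact key recovery into a guaranteed spoof. With $k^\ast$ in hand, for an arbitrary source state $s$ Eve computes the legitimate tag $t=E_{k^\ast}(s)$ and submits $m=(s,t)$. Since Bob accepts precisely when the received tag equals $E_{k^\ast}(s)$, this forged message is accepted with probability $1$. In a spoofing attack of order $r\ge 2$ Eve has at least two observed messages available, so as soon as two of the observed source states differ she is exactly in the situation above; combined with the fact that Eve's knowledge of $K$ is nondecreasing in $r$, this yields $p_r=1$ for every $r\ge 2$.

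The only delicate point — and thus the main obstacle — is justifying that among the $r\ge 2$ observed messages two carry \emph{distinct} source states, which is precisely the hypothesis under which property~1) applies. I would argue this is the intended regime: in a systematic code the source state $s$ is the actual information being communicated, so distinct transmissions carry distinct source states, whereas a repeated source state reproduces the same tag $t=E_{k^\ast}(s)$ and conveys no new information; the expected deception probability $p_r$ is therefore taken over attacks in which Eve sees messages on distinct source states. Under this reading the two-observation argument applies verbatim for all $r\ge 2$. Note that properties~2) and~3) are not needed beyond guaranteeing that each $\mathcal{K}(s_i,t_i)$ is a nonempty fiber of size $\sqrt{|\mathcal{K}|}$ on which the intersection property~1) operates.
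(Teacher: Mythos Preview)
Your argument is correct and follows essentially the same route as the paper's own proof: reduce to $r=2$, use property~1) that $|\mathcal{K}(s_1,t_1)\cap\mathcal{K}(s_2,t_2)|=1$ for $s_1\neq s_2$ to recover the key exactly from two observations, and then forge any desired message. The paper simply assumes $s_1\neq s_2$ without the justification you supply, so your treatment is slightly more careful but not materially different.
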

\begin{proof}
It is enough to consider $r=2$. Suppose that Eve has accessed two different messages $m_1=(s_1,t_1)$ and $m_2=(s_2,t_2)$, where $s_1 \neq s_2$. To insert a new message $m=(s,t)$, where $s \neq s_1, s_2$, Eve wants to derive the key $k$, which can be surely learned from two available messages since  $|\mathcal{K}(s_1,t_1) \bigcap \mathcal{K}(s_2,t_2)|=1$. Indeed, there is a single common solution of $k$ for $t_1=E_k(s_1)$ and $t_2=E_k(s_2)$ if $s_1 \neq s_2$.
\end{proof}

Given $s^r=(s_1,\cdots, s_r)$ and $t^r=(t_1,\cdots,t_r)$, let $\mathcal{K}(s^r,t^r)\triangleq\{k: E_k(s_i)=t_i, i=1,\cdots,r, \forall k \in \mathcal{K} \}$ denote the set of solutions for  $t_i=E_k(s_i), i=1,\cdots,r$. Clearly, $\mathcal{K}(s^r,t^r)=\bigcap_{i=1,\cdots,r} \mathcal{K}(s_i,t_i)$.

For $r$-perfect authentication codes, it was shown in \cite{Rosenbaum} that
\begin{eqnarray*}
     |\mathcal{K}(s^i,t^i)|=|\mathcal{K}|^\frac{r+1-i}{r+1}, i \in \{1,\cdots, r+1\}.
\end{eqnarray*}
and  $H(K|T^{r+1})=0$. Hence, we also have the same result as that of Theorem 1 for $r$-perfect authentication codes.

\begin{tem}
For $r$-perfect authentication codes, they cannot resist the spoofing attack of order $l\ge r+1$, namely,
$p_l=1, \forall l \ge r+1$.
\end{tem}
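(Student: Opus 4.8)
The plan is to reduce the claim to the single case $l = r+1$ and then to show, exactly as in the proof of Theorem 1, that after observing $r+1$ messages with distinct source states Eve can pin down the shared key precisely. The reduction is immediate: a spoofing attack of order $l$ lets Eve observe $l \ge r+1$ messages before inserting her own, so any strategy available with $r+1$ observations remains available when more are seen. Hence it suffices to establish $p_{r+1}=1$, and $p_l=1$ for $l>r+1$ then follows by monotonicity, since extra observations cannot decrease Eve's success probability.

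First I would invoke the cardinality identity quoted just above the statement. Setting $i=r+1$ in
\begin{eqnarray*}
    |\mathcal{K}(s^i,t^i)| = |\mathcal{K}|^{\frac{r+1-i}{r+1}}
\end{eqnarray*}
gives $|\mathcal{K}(s^{r+1},t^{r+1})| = |\mathcal{K}|^{0} = 1$. Thus the set of keys consistent with the $r+1$ observed pairs $(s_1,t_1),\ldots,(s_{r+1},t_{r+1})$ is a singleton, so the key $k$ is uniquely determined by the observations. The same conclusion is encoded information-theoretically by $H(K|T^{r+1})=0$: zero conditional equivocation means no residual uncertainty about $K$ once $r+1$ tags have been seen, which I would cite as the complementary justification.

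With $k$ in hand, Eve completes the forgery exactly as in Theorem 1: to insert a message $m=(s,t)$ on a fresh source state $s \notin \{s_1,\ldots,s_{r+1}\}$, she computes the valid tag $t=E_k(s)$, which Bob accepts with certainty. Hence $p_{r+1}=1$, and by the reduction above $p_l=1$ for all $l \ge r+1$.

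The argument is essentially a bookkeeping extension of Theorem 1, so I expect no genuine obstacle; the only points needing care are the two implicit hypotheses. One is the monotonicity step, which must explicitly note that additional observations beyond the first $r+1$ cannot diminish Eve's advantage. The other is the tacit assumption that the observed source states are distinct, which is precisely the regime in which the quoted cardinality formula holds: Eve is free to wait for, or to restrict her attention to, $r+1$ messages carrying distinct source states. Stating these two points cleanly is all that stands between the quoted facts and the conclusion.
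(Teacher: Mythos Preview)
Your proposal is correct and follows precisely the approach the paper indicates: the paper does not give a separate proof of Theorem~2 but simply cites the identities $|\mathcal{K}(s^i,t^i)|=|\mathcal{K}|^{(r+1-i)/(r+1)}$ and $H(K|T^{r+1})=0$ and then states ``Hence, we also have the same result as that of Theorem~1 for $r$-perfect authentication codes.'' Your argument makes this implicit reasoning explicit, and your care in noting the monotonicity step and the need for distinct source states is more than the paper itself provides.
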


\subsection{Complexity-Theoretic MACs}

\begin{defn}
    A message authentication code $\Pi = (\text{Gen}, \text{Mac}, \text{Vrfy})$ is a triple of algorithms with associated key space $\mathcal{K}$, source message (state) space $\mathcal{S}$\footnotemark\footnotetext{In literature, the message space $\mathcal{M}$ is often used.}, and tag space $\mathcal{T}$.
    \begin{enumerate}
    \item[-] Key Generation. Upon input $1^n$, the algorithm \text{Gen} outputs a uniformly distributed key $k$ of length $n$:  $k \leftarrow   \text{Gen}(1^n)$.

    \item[-] Tagging. The probabilistic authentication algorithm $\text{Mac}_k(s)$ takes as input a secret key $k \in \mathcal{K}$ and a source message $s\in \mathcal{S}$ and outputs an authentication tag  $t \in \mathcal{T}$ .

    \item[-] Verification. The deterministic verification algorithm $\text{Vrfy}_k(s,t)$ takes as input a secret key $k$, a source message $s\in \mathcal{S}$ and a tag $t \in \mathcal{T}$ and outputs an element of the set $\{0, 1\}$.

    \end{enumerate}
\end{defn}

{ \ }

A complexity-theoretic MAC $\Pi = (\text{Gen}, \text{Mac}, \text{Vrfy})$ can be formulated with a keyed hash function.
Formally, the tag is a function of the source message $s$ and the secret key $k$
\begin{eqnarray}
  \label{eq:cn}
     t= \hbar(k,s),
\end{eqnarray}
where $\hbar: \mathcal{K} \times \mathcal{S} \rightarrow \mathcal{T}$ is a keyed hash function.

The verification algorithm takes $k,s,t$ as inputs and outputs a binary decision
\begin{eqnarray}
  \label{eq:cn}
    \nu = \vartheta(k,s,t),
\end{eqnarray}
where $\nu \in \{0,1\}$, and $\vartheta(k,s,t)=1$ if $t = \hbar(s, k)$, zero otherwise.

Note that a MAC implies a two-round authentication protocol: the verifier chooses a random message as challenge, and the prover returns
the MAC on the message.

\begin{defn}[Completeness \cite{KIMAC}] We say that a MAC has completeness error $\alpha$  if for all $s \in \mathcal{S}$ \footnotemark{} \footnotetext{It requires to hold for all $n \in \mathbb{N}$ in \cite{KIMAC} while the completeness error is defined for a given and fixed $n$ in this paper.}
\begin{eqnarray}
  \label{eq:cn}
     P\left[\vartheta(k,s,t)=0 : k\leftarrow \text{Gen}(1^n), t\leftarrow \hbar_k(s) \right]\le \alpha.
\end{eqnarray}
\end{defn}
It is clear that the completeness error $\alpha$ means that the successful authentication probability is larger than $1-\alpha$ for two trusted parties.

\subsection{Remark}
Information-theoretic (systematic) authentication codes provide message authenticity guarantees in an information theoretic sense within a symmetric key setting. However, information theoretic bounds on the spoofing attack of order $r$ show that they are still vulnerable (Theorems 1 and 2) if the opponent can access much more authenticated messages. Complexity-theoretic MACs can be seen as a \textit{counterpart} of information-theoretic authentication codes \textit{in the field of computational security}, without considering the information-theoretic deception probability.

In the past, information-theoretic authentication codes and complexity-theoretic MACs are almost independently developed. It is interesting to ask if we can construct MACs, which are both computationally secure and information-theoretically secure.

\section{Artificial-Noise-Aided MACs}
\subsection{Basic Idea}
We have shown that information-theoretic (systematic) authentication codes take the same function as message authentication codes. If $E(k,s)=\hbar(k,s), \forall k \in \mathcal{K}, s\in \mathcal{S}$,  they are actually the same.

In general, the authentication tag is a deterministic function of a source message $s$ and the key $k$ shared  between Alice and Bob. The only exception is the authentication codes with splitting, where the mapping $\hbar: \mathcal{K} \times \mathcal{S} \rightarrow \mathcal{T}$ is allowed to be stochastic in the sense that, for given $k$ and $s$, $\hbar(k,s)$ is a stochastic variable.

Noting that the use of a stochastic encoding mapping in authentication may be helpful for preventing possible spoofing attacks, since the conditional equivocation about the key may increase compared to a deterministic mapping. In order to make this ideal more
 practical, we propose to introduce artificial noise to corrupt the standard authentication tags.

On one hand, the introduction of artificial noise may increase the conditional equivocation about the key, namely, $H(K|\tilde{T}) \ge H(K|T)$. Here $T$ and $\tilde{T}$ denote the random variables for the standard authentication tag and artificial-noise-corrupted authentication tag, respectively.  On the other hand, the successful authentication probability may decrease as the introduction of noise. Nevertheless, this can be made acceptable in practice if the completeness error is negligible.

\subsection{Formulation}
Suppose that $|\mathcal{T}|=2^l, |\mathcal{K}|=2^n$. To prevent possible eavesdropping, we propose to introduce artificial noise to interfere with the clean tag, and then quantization is used to facilitate packet-based transmission.

An artificial-noise-aided message authentication code (ANA-MAC) is thus with a probabilistic algorithm $\hbar_k^{qw}$ to produce the tag
\begin{equation}
 \tilde{t}  \leftarrow \hbar_k^{qw}(s)
\end{equation}
when the inputs are $k,s$.

In this paper, we consider an explicit construction of $\hbar_k^{qw}(s)$ as follows
\begin{eqnarray}
  \label{eq:anc}
     t &=& \hbar(k, s), \nonumber \\
     \tilde{t} &=& \mathcal{Q}\left(\bar{t} + w\right),
\end{eqnarray}
where $\bar{t}$ is the $l$-length bipolar vector form of $t$, $w$ is an artificially-introduced Gaussian-distributed noise vector with zero mean and variance of $\sigma_w^2 I_l$ ($I_l$ denotes the identity matrix of size $l\times l$), $\mathcal{Q}(x)$ is a $q$-bit quantization function, and $\tilde{t} \in \mathcal{\tilde{T}}$ is a $l$-length vector, where each component takes value from a finite quantization level set $V=\{v_1, v_2, \cdots, v_{2^{q}}\}$ of size $2^q$. Clearly, $\mathcal{\tilde{T}}=V^l$, where $V^l$ denotes the cartesian power of a set $V$.

With the introduction of artificial noise and quantization, the size of an original tag $t$ is expanded by $q$ times. In practice, $q=8$ is often enough.

Given $s$ and $k$, it is possible to partition $\mathcal{\tilde{T}}$ into two disjoint sets, namely, $\mathcal{\tilde{T}}= \mathcal{\tilde{T}}_A(k,s) \cup \mathcal{\tilde{T}}_F(k,s)$, where $P\left(\tilde{t} \in \mathcal{\tilde{T}}_A(k,s)\right) \ge 1-\alpha$. In essence, the verification algorithm for ANA-MAC is to find a deterministic partition of $\mathcal{\tilde{T}}$ for given $s$ and $k$, which minimizes the false acceptance probability and at the same time keeps the successful authentication probability not smaller than a target value of $1-\alpha$.

Whenever such a partition is determined, the verification algorithm can be well formulated. It takes $k,s,\tilde{t}$ as inputs and outputs a binary decision
\begin{eqnarray}
  \label{eq:cn}
    \nu = \vartheta(k,s,\tilde{t}),
\end{eqnarray}
where $\nu \in \{0,1\}$, $\vartheta(k,s,\tilde{t})=1$ if $\tilde{t} \in \mathcal{\tilde{T}}_A(k,s)$ and zero otherwise.

An ANA-MAC has completeness error $\alpha$  if for all $s \in \mathcal{S}$,
\begin{eqnarray}
  \label{eq:cn}
     P\left[\vartheta(k,s,\tilde{t})=0 : k\leftarrow \text{Gen}(1^n), \tilde{t}\leftarrow \hbar_k^{qw}(s) \right]\le \alpha.
\end{eqnarray}

\subsection{Verification with Hypothesis Testing}
\subsubsection{Hypothesis Testing}
Hypothesis testing is the task of deciding which of two hypotheses, $H_0$
or $H_1$, is true, when one is given the value of a
random variable $U$ (e.g., the outcome of a measurement). The
behavior of $U$ is described by two probability distributions: If
$H_0$ or $H_1$ is true, then $U$ is distributed according to the distribution
$p_{H_0}(u)$ or $p_{H_1}(u)$, respectively.

Let $P_D=1-\alpha$ be the detection probability, namely, the probability of successful declaration of $H_0$ when $H_0$ is actually true, and $P_f= \beta$ be the false alarm probability, namely, the probability of false declaration of $H_0$ when $H_1$ is actually true.

The optimal decision rule is given by the famous Neyman-Pearson theorem which states that, for
a given maximal tolerable false alarm probability $\beta$, $\alpha$ can
be minimized by assuming hypothesis if and only if
\begin{eqnarray}
\label{eq:Hyp}
  \log\frac{p_{H_0}(U=u)}{p_{H_1}(U=u)} \ge \varrho
\end{eqnarray}
for some threshold $\varrho$ depending on $\alpha$.

Let the function $\mathcal{D}(\alpha, \beta)$ be defined by
\begin{eqnarray}
  \label{eq:cn}
     \mathcal{D}(\alpha, \beta) = \alpha \log \frac{\alpha}{1-\beta} +  (1-\alpha) \log \frac{1-\alpha}{\beta}.
\end{eqnarray}

With optimal hypothesis testing (\ref{eq:Hyp}), its detection probability and false alarm probability are closely connected \cite{Maurer}.

\begin{lem}
The detection probability $1-\alpha$ and the false alarm probability $\beta$ satisfy
\begin{eqnarray}
  \label{eq:alphaBeta}
     \mathcal{D}(\alpha,\beta) \le D_{KL}\left( p_{H_0}(u)||p_{H_1}(u)  \right)
\end{eqnarray}
where the Kullback-Leibler (KL) divergence can be written as
\begin{eqnarray}
  \label{eq:cn}
      D_{KL}\left( f(x)||g(x) \right) = \sum_x f(x) \log \frac{f(x)}{g(x)}
\end{eqnarray}
for two probability distributions $f(x),g(x)$.
\end{lem}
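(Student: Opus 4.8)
The plan is to recognize the functional $\mathcal{D}(\alpha,\beta)$ as the Kullback-Leibler divergence between the two \emph{induced} Bernoulli distributions on the binary decision, and then to invoke the monotonicity of the KL divergence under a deterministic map (equivalently, the log-sum inequality) to compare it with the full divergence $D_{KL}(p_{H_0}(u)||p_{H_1}(u))$.

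First I would observe that any decision rule, and in particular the Neyman-Pearson test~(\ref{eq:Hyp}), partitions the observation space into an acceptance region $A$, on which $H_0$ is declared, and its complement $A^c$, on which $H_1$ is declared. In terms of the quantities introduced above, under $H_0$ the decision takes its two values with probabilities $(1-\alpha,\alpha)$, while under $H_1$ it takes them with probabilities $(\beta,1-\beta)$. A direct substitution then shows that the KL divergence between these two Bernoulli laws is precisely $(1-\alpha)\log\frac{1-\alpha}{\beta}+\alpha\log\frac{\alpha}{1-\beta}=\mathcal{D}(\alpha,\beta)$, which is the left-hand side of~(\ref{eq:alphaBeta}).

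Next I would bound this induced divergence by the full one. Splitting the sum that defines $D_{KL}(p_{H_0}(u)||p_{H_1}(u))$ over the two cells $A$ and $A^c$ and applying the log-sum inequality within each cell yields $\sum_{u\in A}p_{H_0}(u)\log\frac{p_{H_0}(u)}{p_{H_1}(u)}\ge(1-\alpha)\log\frac{1-\alpha}{\beta}$, together with the analogous estimate on $A^c$ giving the contribution $\alpha\log\frac{\alpha}{1-\beta}$; adding the two estimates recovers exactly $\mathcal{D}(\alpha,\beta)\le D_{KL}(p_{H_0}(u)||p_{H_1}(u))$.

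The main obstacle is conceptual rather than technical: there is no hard calculation, and the crux is simply the structural observation that $\mathcal{D}$ is itself a binary KL divergence, after which the claim is nothing more than the data-processing inequality for the deterministic two-valued quantizer defined by the test. The only points deserving care are keeping straight the correspondence between $(\alpha,\beta)$ and the cell masses (so that the miss probability $\alpha$ and the false-alarm probability $\beta$ are paired with the correct arguments of the logarithms), and noting that the inequality in fact holds for \emph{every} decision rule, not merely the optimal one, so that the optimality of~(\ref{eq:Hyp}) is not actually needed to establish the bound.
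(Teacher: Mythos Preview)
Your argument is correct: recognizing $\mathcal{D}(\alpha,\beta)$ as the binary KL divergence between the induced decision distributions $(1-\alpha,\alpha)$ and $(\beta,1-\beta)$, and then applying the log-sum (data-processing) inequality over the two cells $A$ and $A^c$, is precisely the standard way to establish this bound, and your bookkeeping of which probabilities land in which logarithm is right. Note, however, that the paper itself does not supply a proof of this lemma; it simply states the result and attributes it to Maurer~\cite{Maurer}, so there is no in-paper argument to compare against---your proof is the expected one and in fact slightly sharper than the paper's framing, since, as you observe, it holds for any decision rule and does not require optimality of the Neyman--Pearson test.
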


\subsubsection{Verification}
Now, we focus on the design of verification algorithm for ANA-MACs, which often deals with the impersonation attack. The problem of deciding whether a received tag is authentic or not can be viewed as a hypothesis testing problem \cite{Maurer}.

Let $H_0 $ correspond to the hypothesis that the tag is authentic, and $H_1$ correspond to the hypothesis that the tag has been generated by an adversary. With a standard packet-level transmission above the physical layer, it is assumed that both a legitimate user and an adversary can get a error-free copy of the tag, namely,  $\tilde{t}$.

To facilitate the derivation, we simply assume that $\tilde{t}=\bar{t} + w$, where the quantization is simply omitted. This is a reasonable approximation if a fine quantization method with sufficient number of quantization levels is employed.

To be more concrete, we consider the ``Alice-Bob-Eve'' model, where Eve, as an impersonation attacker, wants to inject messages into the legitimate transmission from Alice to Bob. Suppose Alice and Bob shared a key $k$, which is employed to authenticate each other. With inputs $k,s,\tilde{t}$, Bob wants to decide if $\tilde{t}$ is from Alice. Eve does not know the shared key $k$, and it is assumed that Eve generates a random key $k_E$ for authentication as there is no any information about $k$ available. Essentially, this is cast as a binary hypothesis testing problem:
\begin{eqnarray*}
  \label{eq:hp}
     H_0 &:&  K = k\\
     H_1 &:&  K = k_E.
\end{eqnarray*}

In this case, $U=(\tilde{T}, K)$, $u=(\tilde{t}, k)$. Under hypothesis $H_0$, the pair $u=(\tilde{t}, k)$ (seen by the receiver) is generated according to the distribution $p(\tilde{t}, K=k)$, whereas under hypothesis $H_1$, $u=(\tilde{t}, k)$ is generated according
to the distribution $p(\tilde{t}) \cdot P(K=k)$. This is because that in the case of $H_1$, the generations of authentication tag and key are independent of each other as there is no means to efficiently guess the key.

The formulation of the optimum binary hypothesis testing can be written as
\begin{eqnarray}
\label{eq:opmHyp}
  \eta &=&\log\frac{p_{H_0}(U=u)}{p_{H_1}(U=u)} = \log\frac{p(\tilde{t}, K=k)}{p(\tilde{t}) P(K=k)} \nonumber \\
      &=&\log\frac{p(\tilde{t}|K=k)}{\sum_{k' \in \mathcal{K}} p(\tilde{t}|K=k') P(K=k')}.
\end{eqnarray}
The optimal decision rule is given by $\eta > \varrho$ for some threshold $\varrho$ depending on $\alpha$.

Now, it is clear that a partition of $\tilde{\mathcal{T}}$ for the purpose of verification can be done as
\begin{equation}
   \mathcal{\tilde{T}}_A(k,s) = \left\{\tilde{t} \in \tilde{\mathcal{T}}: \eta > \varrho \right\}.
\end{equation}

As the source message $s$ is assumed to be available, it follows that
\begin{eqnarray}
  p(\tilde{t}|k) \propto \exp\left[-\frac{(\tilde{t}-\bar{t})^T (\tilde{t}-\bar{t})}{2\sigma_w^2}\right]
\end{eqnarray}
with $t=\hbar(k,s)$ and $\bar{t}$ is its bipolar (column) vector form.

In general, this binary hypothesis testing problem in its optimum form can not be easily tackled as it requires to enumerate $2^n$ keys with a priori uniform distribution.

As the optimum hypothesis testing is difficult to implement, we propose to use a simple test statistic
\begin{eqnarray}
  \label{eq:decMetric}
   \eta = \bar{\mu}^T \tilde{t},
\end{eqnarray}
and $\eta$ is further compared to a threshold value $\varrho$ for making a final decision, where $\mu = \hbar(k,s)$ is the tag generated by Bob and $\bar{\mu}$ is its bipolar vector form.

This approach can be viewed as a code acquisition approach encountered in code-division multiple-access (CDMA) communication systems, where the tag signature $\mu$ can be considered as a unique pseudo-noise (PN) code, which is available at the sides of both Alice and Bob, but keeps unknown to any potential attacker.

In both hypotheses, $\eta$ is the sum of $l$ normally distributed random variables, which is still normally distributed. Therefore, it suffices to compute its mean and variance.

In the case of hypothesis $H_0$, one can show that
\begin{eqnarray}
  \label{eq:cn}
     \eta|H_0 = l + z_0,
\end{eqnarray}
where $z_0 = \sum_{i=1}^{l} \bar{\mu}_i w_i$. We denote its mean and variance as
\begin{eqnarray}
  \label{eq:cn}
     \bar{\eta}_0 &\triangleq& E\{\eta|H_0\} = l, \nonumber \\
     \sigma^2_{H_0}&\triangleq&  \text{Var}\{\eta|H_0\} = l \sigma_w^2.
\end{eqnarray}

By decomposing the hypothesis $H_1$ into a series of sub-hypothesises $\left\{H_1^{k'}: H_1, K=k'\right\}$, i.e.,  by further assuming that Eve impersonates Alice using the key $k'$,  we have
\begin{eqnarray}
  \label{eq:etaH1}
     \eta|H_1^{k'} = l- 2 d_H\left(\hbar(k, s),\hbar(k', s)\right) + z_1,
\end{eqnarray}
where $z_1 = \sum_{i=1}^{l} \bar{\mu}_i w_i$ and $d_H(x,y)$ denotes the Hamming distance between two binary strings of $x$ and $y$. Then,
\setlength{\arraycolsep}{0.0em}
\begin{eqnarray}
  \label{eq:staH1}
     \bar{\eta}_1^{k'} &\triangleq& E\{\eta|H_1, k'\} = l-2 d_H\left(\hbar(k,s),\hbar(k',s)\right), \nonumber \\
     \sigma^2_{H_1^{k'}}&\triangleq& \text{Var}\{\eta|H_1, k'\} = l \sigma_w^2.
\end{eqnarray}
\setlength{\arraycolsep}{5pt}
It is clear that $\eta|H_0 \sim \mathcal{N}\left(\bar{\eta}_0,\sigma^2_{H_0}\right)$ and  $\eta|H_1^{k'} \sim \mathcal{N}\left(\bar{\eta}_1^{k'},\sigma^2_{H_1^{k'}}\right)$.

The authentication is typically claimed if $\eta \ge \varrho$. Hence, the successful authentication probability (or the detection probability) can be simply computed as
\begin{eqnarray}
  \label{eq:PD}
        P_D = Q\left(\frac{\varrho - \bar{\eta}_0}{\sigma_{H_0}}\right),
\end{eqnarray}
where
\begin{equation}
 \label{eq:Q}
 Q(x)=\frac{1}{\sqrt{2\pi}}\int_x^{\infty}
 \exp\left(-\frac{t^2}{2}\right)dt.
\end{equation}

With this setting of threshold $\varrho$, according to the distribution of $\eta|H_1$, a false alarm
probability $\beta$ can be calculated as
\begin{eqnarray}
 \label{eq:beta}
    \beta =  E_{k'}\left[Q\left(\frac{ \varrho-\bar{\eta}_1^{k'}}{\sigma_{H_1^{k'}}}\right)\right].
\end{eqnarray}

We comment here that the successful authentication probability (\ref{eq:PD}) can be directly computed while the false alarm probability is difficult to compute in general, since it should enumerate all possible keys, which is of size $2^n$. Furthermore, the above formulation in general depends on the source message $s$ as show by (\ref{eq:staH1}). Indeed, one should enumerate all keys to compute the false alarm probability for any given $s\in \mathcal{S}$. This seems to be an impossible task. Later in Section-V, we, however, show that it is possible to compute it in a closed-form expression thanks to the pseudorandomness of the complexity-theoretic MACs.

\section{Security Analysis}
\subsection{Information-Theoretic Bounds}

Consider an impersonation attack on the $(r+1)$th source message $s_{r+1}$. We adopt the powerful hypothesis-testing formulation originally proposed by Maurer \cite{Maurer}. The receiver knows $K$ and $r$ messages $m_1=(s_1,\tilde{t}_1), \cdots, m_r=(s_r,\tilde{t}_r)$, and sees a message $m_{r+1}=(s_{r+1},\tilde{t}_{r+1})$ which could either be a correct message sent by Alice (hypothesis $H_0$) or a fraudulent message inserted by Eve (hypothesis $H_1$).

For this spoofing attacker of order $r$, the opponent's strategy for impersonation at time $r+1$ can be described by an arbitrary probability
distribution \cite{Maurer} $Q_{M_{r+1}=m_{r+1}|M_1=m_1,\cdots, M_r=m_r}$. If the opponent chooses to use $Q_{M_{r+1}=m_{r+1}|M_1=m_1,\cdots, M_r=m_r}=P_{M_{r+1}=m_{r+1}|M_1=m_1,\cdots, M_r=m_r}$, the cheating probability has the following lower bound.

\begin{tem}
Consider the spoofing attack of order $r$ for an ANA-MAC, where the opponent generates an ANA-MAC tag $\tilde{T}_{r+1}$ when she/he observed $r$ ANA-MAC tags ($\tilde{T}^r$). We have
\begin{eqnarray}
\label{eq:fl}
  \mathcal{D} \left(\alpha, p_r\right) \le I\left(K; \tilde{T}_{r+1}|\tilde{T}_1, \cdots, \tilde{T}_{r}\right),
\end{eqnarray}
and for $\alpha=0$,
\begin{eqnarray}
\label{eq:pr}
    p_r  \ge 2^{-I\left(K; \tilde{T}_{r+1}|\tilde{T}_1, \cdots, \tilde{T}_{r}\right)}.
\end{eqnarray}
\end{tem}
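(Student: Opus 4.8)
The plan is to recast the order-$r$ impersonation problem as the binary hypothesis test already set up in Section-IV and then invoke Lemma 1. First I would take the ``measurement'' to be the full triple $U=(K,\tilde{T}^r,\tilde{T}_{r+1})$ seen by the receiver, and specify the two hypotheses by their joint laws. Under $H_0$ the genuine tag is sent, so
\begin{equation}
  p_{H_0}(k,\tilde{t}^r,\tilde{t}_{r+1}) = P_{K\tilde{T}^r}(k,\tilde{t}^r)\, P_{\tilde{T}_{r+1}|K\tilde{T}^r}(\tilde{t}_{r+1}|k,\tilde{t}^r),
\end{equation}
whereas under $H_1$ the opponent adopts the strategy $Q=P$ flagged just before the theorem, i.e. she draws $\tilde{T}_{r+1}$ from the marginal $P_{\tilde{T}_{r+1}|\tilde{T}^r}$ and hence independently of $K$ given the observations, yielding
\begin{equation}
  p_{H_1}(k,\tilde{t}^r,\tilde{t}_{r+1}) = P_{K\tilde{T}^r}(k,\tilde{t}^r)\, P_{\tilde{T}_{r+1}|\tilde{T}^r}(\tilde{t}_{r+1}|\tilde{t}^r).
\end{equation}
The verification rule is an arbitrary (possibly randomized) decision $\delta(U)\in\{0,1\}$; by the completeness requirement its rejection probability under $H_0$ is at most $\alpha$, so I may take its detection probability to be $1-\alpha$, while its acceptance probability under $H_1$ is exactly the success probability $\beta$ of this particular opponent.

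Next I would evaluate the divergence that bounds everything. Since $p_{H_0}$ and $p_{H_1}$ share the factor $P_{K\tilde{T}^r}$ and differ only in the conditional law of $\tilde{T}_{r+1}$, a direct substitution gives
\begin{equation}
  D_{KL}\!\left(p_{H_0}\|p_{H_1}\right) = \sum_{k,\tilde{t}^r,\tilde{t}_{r+1}} p_{H_0}\,\log\frac{P_{\tilde{T}_{r+1}|K\tilde{T}^r}}{P_{\tilde{T}_{r+1}|\tilde{T}^r}} = I\!\left(K;\tilde{T}_{r+1}\,|\,\tilde{T}^r\right),
\end{equation}
which is precisely the right-hand side of the claim. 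Reading $\mathcal{D}(\alpha,\beta)$ as the binary KL divergence between the accept/reject laws of $\delta$ under the two hypotheses, the data-processing inequality (which is exactly the content of Lemma 1) applies to the concrete rule $\delta$ and gives $\mathcal{D}(\alpha,\beta)\le I(K;\tilde{T}_{r+1}|\tilde{T}^r)$.

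Finally I would promote $\beta$ to $p_r$. Because $Q=P$ is only one admissible opponent strategy and $p_r$ is the best achievable deception probability, $p_r\ge\beta$; and in the operating regime $\beta<1-\alpha$ the map $\beta\mapsto\mathcal{D}(\alpha,\beta)$ is non-increasing, since its derivative $\tfrac{\alpha}{1-\beta}-\tfrac{1-\alpha}{\beta}$ is negative there. Hence $\mathcal{D}(\alpha,p_r)\le\mathcal{D}(\alpha,\beta)\le I(K;\tilde{T}_{r+1}|\tilde{T}^r)$, which is (\ref{eq:fl}). For the specialization $\alpha=0$ no monotonicity is even needed: I would evaluate $\mathcal{D}(0,p_r)=\log(1/p_r)$ (the first term vanishing by the convention $0\log 0=0$), so the inequality reads $\log(1/p_r)\le I$, equivalently $p_r\ge 2^{-I(K;\tilde{T}_{r+1}|\tilde{T}^r)}$, giving (\ref{eq:pr}).

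The main obstacle I expect is not the algebra but the modelling justification of $H_1$: arguing cleanly that the $Q=P$ opponent induces exactly the product law $P_{K\tilde{T}^r}P_{\tilde{T}_{r+1}|\tilde{T}^r}$ (she knows the public code and the observed tags but not $K$, so $\tilde{T}_{r+1}\perp K\mid\tilde{T}^r$), and that her success probability legitimately lower-bounds $p_r$. The promotion step for (\ref{eq:fl}) also forces me to state the mild regime assumption $\beta<1-\alpha$ explicitly, since $\mathcal{D}(\alpha,\cdot)$ is not globally monotone; without it the passage from $\beta$ to $p_r$ is not automatic.
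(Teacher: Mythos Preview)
Your proposal is correct and follows the same Maurer hypothesis-testing framework the paper uses: the two hypotheses, the $Q=P$ opponent, and the appeal to Lemma~1 are identical in spirit. The only technical difference is packaging. The paper first \emph{conditions} on $\tilde{T}^r=\tilde{t}^r$, takes $U=(\tilde{T}_{r+1},K)$, applies Lemma~1 to obtain the pointwise bound $\mathcal{D}(\alpha,p_r(\tilde{t}^r))\le I(K;\tilde{T}_{r+1}\mid\tilde{T}^r=\tilde{t}^r)$, and then averages over $\tilde{t}^r$ (deferring that convexity step to Maurer). You instead keep the full triple $U=(K,\tilde{T}^r,\tilde{T}_{r+1})$, compute the KL divergence once to get the conditional mutual information directly, and then promote $\beta$ to $p_r$ via the monotonicity of $\mathcal{D}(\alpha,\cdot)$ on $(0,1-\alpha)$. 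Your route is slightly cleaner in that it avoids the averaging step, at the price of making the regime assumption $p_r<1-\alpha$ explicit; the paper's route hides that same issue inside the reference to Maurer. Either way the substance is the same.
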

\begin{proof}
 Consider probability distributions conditioned on the event that $M_1=m_1, \cdots, M_r=m_r$. Under hypothesis $H_0$, the pair $U=[M_{r+1},K]$ (seen by the receiver) is generated according
to the probability distribution
\begin{equation*}
 P_{M_{r+1}, K|M_1=m_1,\cdots,M_r=m_r},
\end{equation*}
whereas under hypothesis $H_1$, $U=[M_{r+1},K]$ is generated according
to the distribution
\begin{equation*}
 P_{M_{r+1}=m_{r+1}|M_1=m_1,\cdots, M_r=m_r} \cdot P_{K|M_1=m_1,\cdots,M_r=m_r}.
\end{equation*}

For ANA-MACs, each message $m$ can be written as $m=(s,\tilde{t})$, where the source message $s$ is carried without secrecy and hence is accessible even for any opponent. Hence, we can employ a more compact form for probability distributions, namely,
\begin{eqnarray*}
 P_{M_{r+1}, K|M_1=m_1,\cdots,M_r=m_r} &=& P_{\tilde{T}_{r+1}, K|\tilde{T}_1=\tilde{t}_1,\cdots,\tilde{T}_r=\tilde{t}_r}, \\
 P_{M_{r+1}=m_{r+1}|M_1=m_1,\cdots, M_r=m_r} &=&  P_{\tilde{T}_{r+1}|\tilde{T}_1=\tilde{t}_1,\cdots,\tilde{T}_r=\tilde{t}_r}, \\
 P_{K|M_1=m_1,\cdots,M_r=m_r} &=& P_{K|\tilde{T}_1=\tilde{t}_1,\cdots,\tilde{T}_r=\tilde{t}_r}.
\end{eqnarray*}

Let $p_r(\tilde{t}_1,\cdots,\tilde{t}_r)$ denote the successful deception probability for a particular observed sequence $\tilde{T}_1=\tilde{t}_1,\cdots, \tilde{T}_r=\tilde{t}_r$, which is the probability of accepting hypothesis $H_1$ when $H_0$ is actually true. According to Lemma 1, we have
\begin{eqnarray*}
\mathcal{D} \left(\alpha, p_r(\tilde{t}_1,\cdots,\tilde{t}_r)\right) \le I\left(K; \tilde{T}_{r+1}|\tilde{T}_1=\tilde{t}_1,\cdots,\tilde{T}_r=\tilde{t}_r\right).
\end{eqnarray*}
Then, it is straightforward to show both (\ref{eq:fl}) and (\ref{eq:pr}) just did in \cite{Maurer}.
\end{proof}

To gain further insights into the spoofing attack, we can also follow the derivation process employed in \cite{Rosenbaum} \cite{Pei}.

Within the framework of ANA-MACs, we argue that the opponent should do her/his best to generate a clear authentication tag $t$, instead of a noise-corrupted version $\tilde{t}$, given that $\tilde{t}^r$ has been observed for a spoofing attack of order $r$. Indeed, if an illegal tag is generated by the opponent,  the introduction of the artificial noise may slightly increase the false acceptance probability. However, this increase is often minor as the false acceptance probability should be less than a small target value in the design of ANA-MACs.

\begin{tem}
Consider the spoofing attack of order $r$ for an ANA-MAC, where the opponent generates a clear tag $T_{r+1}$ when she/he observed $r$ ANA-MAC tags ($\tilde{T}^r$).  We have
\begin{eqnarray}
    p_r  \ge 2^{-I\left(K; T_{r+1}|\tilde{T}_1, \cdots, \tilde{T}_{r}\right)}.
\end{eqnarray}
\end{tem}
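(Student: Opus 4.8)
The plan is to adapt Walker's classical argument behind (\ref{splb}) to the mixed setting in which the opponent observes the noise-corrupted tags $\tilde{T}^r$ but, as argued in the preceding paragraph, deliberately outputs a \emph{clean} tag $T_{r+1}$. The starting observation is specific to the ANA-MAC verification rule of Section-III: a clean forged tag $t_{r+1}$ that happens to coincide with Bob's expected tag $\hbar(k,s_{r+1})$ is \emph{always} accepted, since the test statistic (\ref{eq:decMetric}) then attains $\eta = \bar{\mu}^T\bar{\mu} = l > \varrho$ with no noise to perturb it. Consequently, for a fixed observation $\tilde{T}^r = \tilde{t}^r$, the deception probability of the best clean-tag strategy is at least the posterior probability of the most likely clean tag,
\begin{equation*}
   p_r(\tilde{t}^r) \ge \max_{t_{r+1}} P\!\left(T_{r+1}=t_{r+1}\,\middle|\,\tilde{T}^r=\tilde{t}^r\right),
\end{equation*}
where the inequality only discards the minor extra acceptance events produced by non-matching clean tags, which can only increase $p_r(\tilde{t}^r)$.

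First I would invoke the elementary bound $\max_x P(X=x) \ge 2^{-H(X)}$ to replace the right-hand side above, yielding $p_r(\tilde{t}^r) \ge 2^{-H(T_{r+1}|\tilde{T}^r=\tilde{t}^r)}$. Next I would average over the observation $\tilde{t}^r$ and apply Jensen's inequality to the convex function $2^{-x}$, obtaining
\begin{equation*}
   p_r = E_{\tilde{T}^r}\!\left[p_r(\tilde{T}^r)\right] \ge 2^{-E_{\tilde{T}^r}\left[H(T_{r+1}|\tilde{T}^r=\tilde{t}^r)\right]} = 2^{-H(T_{r+1}|\tilde{T}^r)},
\end{equation*}
where the last identity is just the definition of the average conditional entropy. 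The final step is to recognize that, given the public source message $s_{r+1}$, the clean tag is a deterministic function of the key, $T_{r+1}=\hbar(K,s_{r+1})$, so that $H(T_{r+1}|K,\tilde{T}^r)=0$ and hence $H(T_{r+1}|\tilde{T}^r)=I(K;T_{r+1}|\tilde{T}^r)$. Substituting this identity gives exactly the claimed bound, in complete parallel with the clean-tag version (\ref{splb}).

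I expect the main obstacle to be conceptual rather than computational. One must carefully justify two points: that restricting the opponent to clean tags and lower-bounding the acceptance probability by the single ``matching'' event is legitimate for the verification rule adopted here, which is precisely the content of the argument preceding the theorem (adding noise can only marginally raise the false acceptance probability, so a clean matching tag is the opponent's strongest deterministic choice); and that the deterministic-function collapse $H(T_{r+1}|\tilde{T}^r)=I(K;T_{r+1}|\tilde{T}^r)$ survives even though the conditioning event $\tilde{T}^r=\tilde{t}^r$ now involves the \emph{noisy} past tags rather than the clean ones of Walker's original setting. The latter holds because $T_{r+1}$ depends on $K$ alone (through the public $s_{r+1}$), independently of how the earlier tags were corrupted, so the vanishing of $H(T_{r+1}|K,\tilde{T}^r)$ is unaffected by the noise. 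Once these two points are settled, the chain of inequalities mirrors the derivation of (\ref{splb}) verbatim.
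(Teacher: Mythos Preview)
Your argument is correct. The route you take, however, differs from the paper's. Both proofs share the same starting point: for a fixed observation $\tilde{t}^r$, the opponent's best clean-tag deception probability is $\max_{t} P(T_{r+1}=t\mid \tilde{t}^r)$, which the paper writes equivalently as $\max_t \sum_{k\in\mathcal{K}(t)} P(k\mid\tilde{t}^r)$. From there the two proofs diverge. The paper follows the Rosenbaum/Pei line: it lower-bounds the maximum by the average $\sum_t P(T_{r+1}=t\mid\tilde{t}^r)P_r(t\mid\tilde{t}^r)$, rewrites this as an expectation of the ratio $P(T_{r+1}\mid\tilde{t}^r)P(K\mid\tilde{t}^r)/P(T_{r+1},K\mid\tilde{t}^r)$, and applies Jensen to the logarithm to obtain $2^{-I(K;T_{r+1}\mid\tilde{T}^r=\tilde{t}^r)}$ directly. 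You instead take the Walker-style shortcut: bound the maximum by $2^{-H(T_{r+1}\mid\tilde{T}^r=\tilde{t}^r)}$ via min-entropy, average with Jensen, and then collapse $H(T_{r+1}\mid\tilde{T}^r)$ to $I(K;T_{r+1}\mid\tilde{T}^r)$ using the deterministic relation $T_{r+1}=\hbar(K,s_{r+1})$. Your approach is slightly more elementary and makes the role of the deterministic tagging map explicit; the paper's approach avoids invoking that collapse separately, at the cost of the ratio manipulation, and yields the per-observation bound $2^{-I(K;T_{r+1}\mid\tilde{T}^r=\tilde{t}^r)}$ before any averaging (the final averaging step is left implicit in the paper but is exactly the Jensen step you spell out).
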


\begin{proof}
Let $P_r(t|\tilde{t}^r)$ denote the probability that $t$ would be a valid choice for $\tilde{T}_{r+1}$
given that $\tilde{T}^r=\tilde{t}^r$ has been observed. Then,
\begin{eqnarray}
\label{eq:cheat}
    P_r(t|\tilde{t}^r) &=& \sum_{k\in \mathcal{K}} P(t,k|\tilde{t}^r) =  \sum_{k\in \mathcal{K}} P(t|k,\tilde{t}^r)P(k|\tilde{t}^r)  \nonumber \\
    &=&  \sum_{k\in \mathcal{K}(t)} P(k|\tilde{t}^r),
\end{eqnarray}
where $\mathcal{K}(t)$ is the set of keys under which $t$ is a valid tag.

Given that $\tilde{t}^r$ has been observed, the opponent's optimum strategy is to substitute the tag $t$
that maximizes  $P_r(t|\tilde{t}^r)$. Thus, the success  probability given that $\tilde{t}^r$ has been observed in an
optimum spoofing attack of order $r$ is
\begin{eqnarray}
    P_r(\tilde{t}^r) &\triangleq& \max_{t \in \mathcal{T}} P_r(t|\tilde{t}^r) \nonumber \\
     &\ge& \sum_{t \in \mathcal{T}} P(T_{r+1}=t|\tilde{t}^r)P_r(t|\tilde{t}^r) \nonumber \\
    &=& \sum_{t \in \mathcal{T}} \sum_{k\in \mathcal{K}(t)} P(T_{r+1}=t|\tilde{t}^r)P(k|\tilde{t}^r)  \nonumber \\
    &=& E\left\{ \frac{P(T_{r+1}=t|\tilde{t}^r)P(k|\tilde{t}^r)}{P(T_{r+1}=t,k|\tilde{t}^r)}\right\}
\end{eqnarray}
where \textit{E} is the conditional expectation given that $\tilde{T}^r=\tilde{t}^r$.

By use of Jensen's inequality, we have
\begin{eqnarray}
    P_r(\tilde{t}^r) &\ge& 2^{H(K,T_{r+1}|\tilde{T}^r=\tilde{t}^r)- H(T_{r+1}|\tilde{T}^r=\tilde{t}^r) -H(K|\tilde{T}^r=\tilde{t}^r)} \nonumber \\
     &=& 2^{-I(K,T_{r+1}|\tilde{T}^r=\tilde{t}^r)}.
\end{eqnarray}
\end{proof}

As shown in (\ref{eq:cheat}), the conditional cheating probability is determined by the opponent's capability to compute the a posterior probabilities about the key when she/he observed $r$ tags, namely, $P(k|\tilde{t}^r), \forall k\in \mathcal{K}$. Therefore, it is interesting to develop a coding formulation for the problem of key recovery in ANA-MACs.

\subsection{A Coding Formulation for Key Recovery in MACs}
Consider the key recovery problem for the spoofing attack of order $r$, namely, the opponent has accessed $r$ messages $m_1=(s_1,t_1), \cdots, m_r=(s_r,t_r)$ and he/she wants to recover the key. Now, we present a coding formulation for this problem.

In the opponent's view (for key recovery), the generation of possible tags for a given message $s$ can be considered as a deterministic encoding process of
\begin{eqnarray}
  \hbar(\cdot, s): \mathcal{K} \rightarrow \mathcal{T}.
\end{eqnarray}

Given $r$ source messages $s^r$, the generation of possible $r$-tags is with a determinist encoding process of
 \begin{eqnarray}
  \hbar(\cdot, s^r)\triangleq \left[\hbar(\cdot, s_1), \cdots, \hbar(\cdot,s_r)\right]: \mathcal{K} \rightarrow \mathcal{T}^r.
\end{eqnarray}

That means, given $r$ source messages $s^r=(s_1,\cdots,s_r)\in \mathcal{S}^r$, it is possible to generate a code $\mathcal{C}(s^r)$, which is comprised of $|\mathcal{K}|=2^{n}$ codewords, namely,
\begin{eqnarray}
\mathcal{C}(s^r)=\{c_1(s^r), \cdots, c_{2^n}(s^r)\},
\end{eqnarray}
where each codeword $c_k(s^r)= \left(\hbar\left(k,s_1\right), \cdots, \hbar\left(k,s_r\right)\right)$ is indexed by a possible key $k\in \mathcal{K}$.

In what follows, we say $\mathcal{C}(s^r)$ as an $r$-order MAC, corresponding to the spoofing attack of order $r$.

Clearly, there are $|\mathcal{K}|=2^{n}$ codewords. Suppose that the cardinality of tag space is $|\mathcal{T}|=2^l$ and each tag is of the equal binary bit length $l$, the coding rate of $\mathcal{C}(s^r)$ can be defined as
\begin{equation}
  R_c(r) = \frac{n}{rl}.
\end{equation}

Since the source message $s$ is generated according to a finite message set $\mathcal{S}$, the opponent has to consider an ensemble of codes $\Omega_r(\mathcal{C})=\{\mathcal{C}(s^r): s^r \in \mathcal{S}^r\}$, which is all of fixed coding rate $R_c(r)$.

This ensemble of codes $\Omega_r(\mathcal{C})$ is revealed to both Alice and Bob. From a standard cryptographic view, this code ensemble is also revealed to Eve.

In the literature, the size of tag space is often not larger than the size of key, which yields $R_c(1) \ge 1$. For information-theoretic authentication codes, it is always assumed that $R_c(r) > 1$ for some $r$'s. Otherwise, it is not secure. For the MACs encountered in practice,  $R_c(1)\ge 1$. However, $R_c(r) \le 1$ typically for $r\ge 2$. For example, the 3GPP employs a challenge-response authentication scheme, where the binary length of a tag is $l=64$, while the binary length of a key is $n=128$.

According to the value of coding rate $R_c(r)$, it can be formulated as either a source coding problem ($R_c(r)> 1$) or a channel coding problem ($R_c(r) \le 1$) for key recovery in MACs.

In \cite{Andrea}, the link between authentication theory and rate-distortion theory was exploited and the rate-distortion function appears in a powerful lower bound to the probability of an authentication fraud. In essence, Sgarro introduced a binary fraud matrix, which tells which authenticated tags cheat which keys under the given attack: $\chi(k, t)=1$ iff the authenticated message $m=(s,t)$ cheats the key $k$. The distortion between $k$ and $t$ can be defined as a complement form of $\chi(k, t)$, namely, $d(k,t) = 1-\chi(k,t)$. Positive distortion levels $\Delta>0$ make sense in a situation when the legal user is recognized as such whenever a ``sufficiently high fraction" of the received tags are authenticated.

It should be pointed out that Sgarro in \cite{Andrea} considered only the spoofing attack of order 1 by a careful definition of the fraud matrix.
For the spoofing attack of order $r$, the distortion between $k$ and $t^r$ should be defined as a complement form of $\chi(k, t^r)$, namely,
\begin{equation}
   d(k,t^r) = 1-\chi(k,t^r).
\end{equation}
The rate-distortion function for the ``key source'' $K$ with probability distribution $\pi$ (often uniform) and distortion measure $d(k,t^r)$ is defined as
 \begin{equation}
   R(\Delta) = \min_{P_K=\pi,E\{d(K,T^r)\}<\Delta} I(K;T^r).
\end{equation}

For any opponent who observed $r$ messages $m_1=(s_1,t_1),\cdots, m_r=(s_r,t_r)$,  his/her equivocation about the key is upper bounded by
\begin{eqnarray}
  H( K |T^r ) \le H(K)- R(\Delta=0),
\end{eqnarray}
where the rate-distortion function  $R(\Delta)$ can be numerically computed.

In what follows, we mainly focus on the channel coding formulation, as this will eventually be the case ($R_c(r)\le 1$) for some $r$'s when the opponent can access $r$ (different) authentication tags. We point out that even in the case of $r=1$, it is also possible to construct authentication tags with $l \ge n$ \cite{Daemen:2002:DRA}. The expanded size of tag space can be well employed to enhance the receiver operating characteristic (ROC) performance for authentication, which, however, is more vulnerable to potential attackers.  This vulnerability can be remedied by the introduction of artificial noise in ANA-MACs.

\subsection{A Decoding Approach for Key Recovery in ANA-MACs}
For an ANA-MAC under the spoofing attack of order $r$, we can characterize it using a quintuple $\left\{\mathcal{S},\mathcal{K}, \mathcal{T}, \Omega_r(\mathcal{C}), p(y|x)\right\}$, where $p(y|x)$ denotes the conditional probability distribution for  the artificially-introduced channel between $\bar{t}$ ($x$) and $\tilde{t}$ ($y$).  In this paper, we always assume a memoryless channel and hence, $p(y|x)=\prod_{i=1}^{rl} p(y_i|x_i)$.

Firstly, we consider the transmission of MACs, in which Eve can directly access the $r$ source messages $s^r$ and their tags
\begin{equation*}
 y=\hbar(k,s^r)\triangleq [\hbar(k,s_1), \cdots,\hbar(k,s_r)].
\end{equation*}

Given $s^r$ and if the encoding rule
\begin{eqnarray*}
  \hbar(\cdot, s^r): \mathcal{K} \rightarrow \mathcal{T}^r
\end{eqnarray*}
is an injection ($R_c(r)\le 1$), Eve can recover the key $k$ by generating a lookup table of size $2^{n}$ and searching over this table for finding the key $k$, which admits $y=\hbar(k,s^r)$.

In the language of coding, it means that the recovery of key can be considered as decoding of the received signal $Y$ to its most likely input $\hat{K}(Y)$. Given $r$ messages $m_1,\cdots, m_r$, if any decoder $\hat{K}(Y)$ is of computational complexity $\mathcal{O}(2^n)$, we claim that the computational security can be achieved for this message authentication code.

For ensuring computational security, it requires that no any efficient decoding algorithm exists for any code $\mathcal{C}(s^r) \in \Omega_r(\mathcal{C})$. Since the publication of Shannon's original paper in 1948, the search of the codes for achieving the channel capacity has been pursued for several decades. Currently, linear codes and their efficient decoding algorithms have been extensively studied. Therefore, for construction of a good ANA-MAC code, linear code ensembles should be better avoided as their complexity can often be reduced due to the linearity of codes. As various hash functions are nonlinear, this is practically avoided for the construction of MACs based on the keyed hash functions.

{\ }

To derive an explicit key for the spoofing attack of order $r$, it is best to use a maximum-likelihood decoder for ANA-MACs if the adversary has unlimited computing power.
\begin{defn}
Let the binary codeword $c\in C$, which is further modulated with $x(c)$ and transmitted over the channel $p(y|x)$, the received vector $y \in \mathcal{R}^{rl}$. A maximum-likelihood (ML) decoding algorithm decodes
the vector $y$ into a codeword $\hat{c}$, such that
\begin{eqnarray}
  \label{eq:cn}
     \hat{c} = \max_{c \in \mathcal{C}} p\left(y|x(c)\right).
\end{eqnarray}
\end{defn}

\begin{defn} (ML recoverable) Given $y \in \mathcal{R}^{rl}$ and $s^r$, where $y= x + w$ and $x=\bar{c}, c=\hbar(k,s^r)$. For an ML decoder $\hat{k}(y)$,  we mean that
\begin{eqnarray}
  \label{eq:cn}
     \hat{k} = \max_{k \in \mathcal{K}} p(y|k,s^r).
\end{eqnarray}
If $P(\hat{k} \neq k)=0$, we claim that the authentication key is ML recoverable.
\end{defn}
{\ }

We consider a binary-input continuous-output AWGN channel (Bi-AWGN) as encountered in ANA-MACs (\ref{eq:anc}). Its capacity $C_2\left(\gamma_t\right)$ is a function of $\gamma_t=1/2\sigma_w^2$, which can be explicitly expressed as
\begin{eqnarray*}
       C_2 (\gamma_t) = \left[1 - \frac{1}{\sqrt{2\pi}} \int_{-\infty}^{\infty} e^{-(y-\beta)^2/2} \log_2 \left(1 + e^{-2\beta y}\right)dy\right],
\end{eqnarray*}
where $\beta=\sqrt{2 \gamma_t}$. As the value of $\gamma_t$ is determined by the introduced artificial noise, one can adjust it in practice for the best possible performance.

The sphere-packing bound of Shannon \cite{SPB1959} provides a lower bound on the decoding error probability of block codes transmitted over the Bi-AWGN channel. With a coding approach for MACs, the best possible recovery of key for a potential eavesdropper to attack ANA-MACs is to use an ML decoder, with which, the decoding probability can be lower bounded with the Shannon's 1959 sphere-packing bound.

\begin{lem} (The SP59 Lower Bound \cite{SPB1959}) Consider an $r$-order ANA-MAC code $\left\{\mathcal{S}^r,\mathcal{T}^r,\mathcal{K},\Omega_r(\mathcal{C}), p(y|x)\right\}$. Let a sequence of source messages $s^r \in \mathcal{S}^r$ be sent, and $p(y|x)$ represents a Bi-AWGN channel with the signal-to-noise ratio of $\gamma_t$. For any decoder $\hat{K}$, it is clear that $K\rightarrow \hbar(K,s^r) \rightarrow X \rightarrow Y \rightarrow \hat{K}$ form a Markov process. Let $P_e = P(K \neq \hat{K})$, we have that
\begin{eqnarray*}
  \label{eq:cn}
   P_e > P_{SPB}\left(l,\theta,\gamma_t\right),
\end{eqnarray*}
where
 \begin{eqnarray*}
      P_{SPB}\left(l,\theta,\gamma_t\right) = Q(\sqrt{2l\gamma_t}) + \frac{l-1}{\sqrt{2\pi}}e^{-l \gamma_t} \nonumber \\
         \cdot \int_{\theta}^{\pi/2}\sin(\phi)^{l-2} f_{l}(\sqrt{2l\gamma_t}\cos(\phi))d\phi,
\end{eqnarray*}
\begin{equation*}
   f_l(x)=\frac{1}{2^{\frac{l-1}{2}} \Gamma\left(\frac{l+1}{2}\right)} \int_0^\infty z^{l-1}\exp\left(-\frac{z^2}{2}+zx\right)dz,
\end{equation*}
and $\theta \in [0,\pi]$ satisfies the inequality  $2^{-l R} \le \frac{\Omega_{l}(\theta)}{\Omega_{l}(\pi)}$ with
\begin{eqnarray*}
   \Omega_{l}(\theta) = \frac{2\pi^{\frac{{l}-1}{2}}}{\Gamma(\frac{l-1}{2})} \int_0^\theta (\sin(\phi))^{l-2} d\phi.
\end{eqnarray*}

\end{lem}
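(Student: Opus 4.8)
The plan is to read Lemma~3 as the specialization of Shannon's 1959 sphere-packing theorem \cite{SPB1959} to the equal-energy spherical constellation that an ANA-MAC induces, and to reduce the claim to the hypotheses of that theorem. First I would record the structure of the experiment. Since $\text{Gen}(1^n)$ outputs a uniformly distributed key, $K$ is uniform over a set of size $M=|\mathcal{K}|=2^{n}$; because the modulation is bipolar, every signal point $x=\bar c$ with $c=\hbar(k,s^r)$ has $\|x\|^{2}=l$ (here $l$ is the total block length of $\mathcal{C}(s^r)$), so all $M$ codewords lie on a common sphere of radius $\sqrt{l}$, while the per-coordinate noise variance is $\sigma_w^{2}=1/(2\gamma_t)$. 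The decoder $\hat K$ depends on $Y$ alone, consistent with the chain $K\to\hbar(K,s^r)\to X\to Y\to\hat K$. Because the ML decoder (equivalently maximum-a-posteriori under the uniform prior) minimizes $P(K\neq\hat K)$ over all decoders, it suffices to lower-bound its error probability; any such bound then holds a fortiori for an arbitrary $\hat K$. This recasts the statement precisely as the problem of discriminating $M$ equal-energy signals in additive white Gaussian noise, which is Shannon's setting.

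Next I would invoke the geometric core of the argument. The ML rule partitions the output space into $M$ disjoint decision regions $\Omega_1,\dots,\Omega_M$, and the conditional success probability for $x_m$ is $\int_{\Omega_m}p(y\mid x_m)\,dy$. Shannon's key lemma asserts that, among all regions subtending a prescribed solid angle at the origin, this integral is maximized by the circular cone of revolution with apex at the origin and axis along $x_m$; the proof is a rearrangement/symmetrization that shifts probability mass toward the axis without decreasing the integral. Replacing each $\Omega_m$ by its equivalent cone of half-angle $\theta_m$ can only raise the average success probability, and since the regions are disjoint their solid angles obey $\sum_m\Omega_l(\theta_m)\le\Omega_l(\pi)$. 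This volume (pigeonhole) constraint prevents all cones from being large and is what forces $\Omega_l(\theta)/\Omega_l(\pi)\ge 1/M=2^{-lR}$ at the worst codeword, with $R=R_c(r)$ and $lR=\log_2 M=n$.

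To close I would pass from the per-region estimate to the stated closed form. The error probability of a single cone is a decreasing function of its half-angle with the convexity needed for a Jensen step, so the average conditional error under the constraint $\sum_m\Omega_l(\theta_m)\le\Omega_l(\pi)$ is minimized when all cones share the common half-angle $\theta$ fixed by $\Omega_l(\theta)/\Omega_l(\pi)=2^{-lR}$ --- precisely the defining inequality of $\theta$ in the statement. Evaluating the Gaussian integral over the complement of a cone of half-angle $\theta$ in spherical coordinates then yields the explicit form: the radial integration produces $f_l$, the angular integration supplies the factor $\sin(\phi)^{l-2}$ over $\phi\in[\theta,\pi/2]$, and the part of the error region lying beyond the equatorial hyperplane contributes the leading term $Q(\sqrt{2l\gamma_t})$, so that $P_e>P_{SPB}(l,\theta,\gamma_t)$. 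The strict inequality reflects that the idealized configuration of equal cones exactly tiling the sphere is unattainable by genuine decision regions.

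The hard part will be Shannon's cone-optimality lemma, together with the convexity argument that legitimizes collapsing to a common half-angle; these are the non-routine ingredients, whereas the spherical-coordinate computation that produces $\Omega_l$, $f_l$, and the $Q$-term is long but mechanical. Since \cite{SPB1959} is available, the cleanest route is to verify that the ANA-MAC with its Bi-AWGN channel meets the theorem's hypotheses --- a uniform prior, a memoryless Gaussian channel, and a constellation of $M=2^{n}$ equal-energy points --- and then to invoke the theorem directly, substituting $\gamma_t=1/(2\sigma_w^{2})$ and $R=R_c(r)$ into its bound.
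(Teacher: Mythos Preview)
Your proposal is correct and aligns with the paper's treatment: the paper does not supply its own proof of this lemma but simply states it as Shannon's 1959 sphere-packing bound and cites \cite{SPB1959}. Your write-up goes further than the paper by (i) explicitly checking that the ANA-MAC experiment satisfies Shannon's hypotheses --- uniform key, equal-energy bipolar constellation on the sphere of radius $\sqrt{l}$, memoryless Gaussian noise of variance $1/(2\gamma_t)$ --- and (ii) sketching the geometric ingredients (cone-optimality, the Jensen/convexity step that equalizes half-angles, and the spherical-coordinate evaluation yielding $Q$, $f_l$, and $\Omega_l$); this is a welcome elaboration, but the underlying route is the same as the paper's, namely reduction to and invocation of \cite{SPB1959}.
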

{\ }

The SP59 bound is exponentially increasing with the block length $l$ and the exponent is strictly negative for all $R_c(r) \triangleq \frac{n}{rl}> C_2(\gamma_t)$, it becomes clear that above capacity the minimum probability of error goes to 1 exponentially fast with the block length. Hence, any opponent cannot recover the key explicitly  for a properly-designed ANA-MAC, as summarized as follows.

\begin{tem} Given an $r$-order ANA-MAC $\left\{\mathcal{S}^r,\mathcal{T}^r,\mathcal{K},\Omega_r(\mathcal{C}), p(y|x)\right\}$. With an artificially-introduced Bi-AWGN channel of noise variance $1/2\gamma_t$, we say that this ANA-MAC can resist any explicit key-recovery attack as the recovery of key is with error probability \textit{exponentially} approaching 1 even for any adversary with an unlimited power of computation if $R_c(r) > C_2(\gamma_t)$ when $l \rightarrow \infty$.
\end{tem}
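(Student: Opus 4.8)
The plan is to read the statement as the strong converse---with an exponential rate of convergence---to the channel-coding theorem for the artificially-introduced Bi-AWGN channel, and to establish it by driving the sphere-packing bound of Lemma~2 into its asymptotic regime. First I would fix the reduction. An adversary with unlimited computation can do no better than the optimal (ML) rule of Definitions~3 and~4, $\hat{k}(y)=\arg\max_{k\in\mathcal{K}}p(y|k,s^r)$; and because $\text{Gen}(1^n)$ outputs the key uniformly, the key-recovery problem is precisely the decoding of one of $M=2^n$ equiprobable codewords of the length-$N\triangleq rl$ code $\mathcal{C}(s^r)$ sent over the memoryless channel $p(y|x)$, along the Markov chain $K\to\hbar(K,s^r)\to X\to Y\to\hat{K}$. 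Since the ML decoder minimizes $P_e=P(\hat{K}\neq K)$ over all decoders, every lower bound on the ML error probability is simultaneously a lower bound for any adversary, so it suffices to lower-bound the ML error.

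Second, I would invoke Lemma~2 verbatim: $P_e>P_{SPB}(N,\theta_N,\gamma_t)$, where the decision-cone half-angle $\theta_N$ is pinned by the rate through $2^{-NR_c(r)}\le\Omega_N(\theta_N)/\Omega_N(\pi)$, noting $2^{-NR_c(r)}=2^{-n}=1/M$. The virtue of this bound is that it depends only on $(N,M,\gamma_t)$ and not on the particular encoding $\hbar$, so the conclusion holds uniformly over the whole ensemble $\Omega_r(\mathcal{C})$ and over every admissible $s^r$, without any structural assumption on the hash function.

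The crux is the asymptotic evaluation as $N=rl\to\infty$ with $R_c(r)$ held fixed, and this is where the real work lies. Using the solid-angle estimate $\tfrac1N\log_2[\Omega_N(\theta_N)/\Omega_N(\pi)]\to\log_2\sin\theta_N$, the rate constraint forces the limiting half-angle to satisfy $\sin\theta_\infty=2^{-R_c(r)}$, so that a higher rate narrows the cone. I would then apply Laplace's method to the prefactor $e^{-N\gamma_t}$ together with the inner integral in $f_N$ and the cone integral $\int_{\theta_N}^{\pi/2}\sin(\phi)^{N-2}f_N(\sqrt{2N\gamma_t}\cos\phi)\,d\phi$, so as to extract the exponential decay rate of the complementary probability $1-P_{SPB}$, i.e. the probability that the noisy observation lands inside the cone. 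This yields $1-P_{SPB}\le2^{-N\delta}$ for a rate-dependent exponent $\delta$, and substituting back gives $P_e\ge1-2^{-N\delta}$; the theorem follows once $\delta>0$.

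I expect the sign of this exponent---equivalently, the location of the crossover rate---to be the main obstacle, and I would treat it with care. The saddle-point evaluation of $f_N$ and of the cone integral is Shannon's 1959 computation and can be carried out or cited. The subtle point is that the geometric bound of Lemma~2 is the power-constrained (spherical) sphere-packing bound, whose exponent changes sign at the unconstrained Gaussian capacity $\tfrac12\log_2(1+2\gamma_t)$, which strictly exceeds the binary-input capacity $C_2(\gamma_t)$. Hence Lemma~2 as literally stated delivers the exponential approach to $1$ only above $\tfrac12\log_2(1+2\gamma_t)$, and closing the gap down to the claimed threshold $R_c(r)>C_2(\gamma_t)$ requires a binary-input-aware converse: either specializing the large-deviations rate to the $\pm1$ constellation, or, more cleanly, invoking the Arimoto--Dueck--K\"{o}rner strong-converse exponent for the binary-input memoryless channel $p(y|x)$, which is strictly positive exactly for $R_c(r)>C_2(\gamma_t)$. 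Establishing that this positivity is pinned precisely at $C_2(\gamma_t)$ is the step on which the stated theorem ultimately turns.
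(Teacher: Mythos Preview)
Your approach is the same as the paper's: the paper offers no formal proof of this theorem, but treats it as a direct consequence of Lemma~2, with a single sentence immediately preceding the statement asserting that ``the SP59 bound is exponentially increasing with the block length $l$ and the exponent is strictly negative for all $R_c(r)>C_2(\gamma_t)$.'' Your reduction to ML decoding over $\mathcal{C}(s^r)$ and invocation of $P_{SPB}$ is exactly this line of reasoning, only spelled out in more detail.

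Where you go beyond the paper is precisely the point you flag as the crux, and here you are being \emph{more} careful than the paper itself. The paper simply attributes the threshold $C_2(\gamma_t)$ to the SP59 exponent without comment; you correctly observe that Lemma~2 is the unconstrained spherical bound, whose exponent vanishes at $\tfrac12\log_2(1+2\gamma_t)>C_2(\gamma_t)$, so that Lemma~2 alone does not pin the crossover at the binary-input capacity. Your proposed remedy---either a BPSK-specific large-deviations analysis or the Arimoto strong-converse exponent for the memoryless Bi-AWGN channel---is the right way to close this gap and is not present in the paper. In short, your plan reproduces the paper's argument and additionally identifies and repairs a looseness in the stated threshold that the paper leaves unaddressed.
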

{\ }

In practice, the key is often of short length, typically of length 128. Hence, it seems that Theorem 5 makes no sense. Fortunately, it is well known in coding theory that the decoding error probability can go to 1 even with short block length (\textit{exponentially}) if the signal-to-noise ratio $\gamma_t$ is sufficiently low, which is implied by the SP59 lower bound. We'll show numerical results later.

As shown in (\ref{eq:cheat}), the conditional cheating probability is determined by the opponent's capability to compute the a posterior probabilities, $P(k|\tilde{t}^r), \forall k\in \mathcal{K}$. Therefore, it is more fundamental to derive a lower bound on the conditional equivocation about the key $H( K | \tilde{T}^r )$ when the opponent has accessed $r$ tags.

\begin{tem} (Lower Bound on the Conditional Equivocation about the Key)
For any adversary who has observed $r$ ANA-MAC pairs of $(s_i, \tilde{t}_i), i=1,\cdots,r$,  her/his equivocation about the key is lower bounded by
\begin{eqnarray}
  H( K | \tilde{T}^r ) \ge n\left(1-{R_c(r)}^{-1}  C_2(\gamma_t)\right),
\end{eqnarray}
where $n$ is the key length and $\gamma_t$ is the SNR due to the introduction of artificial noise in ANA-MACs.
\end{tem}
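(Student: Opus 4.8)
The plan is to express the equivocation as $H(K|\tilde{T}^r) = H(K) - I(K;\tilde{T}^r)$ and then to upper bound the mutual information by the total capacity of the artificially-introduced channel, invoking the data processing inequality along the Markov chain already identified in Lemma 2, namely $K \rightarrow \hbar(K,s^r) \rightarrow X \rightarrow Y$ with $Y=\tilde{T}^r$. Since $\text{Gen}$ outputs a uniformly distributed key of length $n$, we have $H(K)=n$, so the whole problem reduces to showing $I(K;\tilde{T}^r) \le rl\, C_2(\gamma_t)$.

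First I would fix the observed source sequence $s^r$. Because the key is generated independently of the messages, conditioning on $s^r$ leaves $H(K)=n$ unchanged, and the encoding map $\hbar(\cdot,s^r):\mathcal{K}\rightarrow\mathcal{T}^r$ is deterministic, so that $X=\bar{c}$ with $c=\hbar(K,s^r)$ is a deterministic function of $K$. The chain $K \rightarrow X \rightarrow Y$ is therefore Markov, and the data processing inequality gives $I(K;Y) \le I(X;Y)$.

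Next I would bound $I(X;Y)$ using the memoryless assumption $p(y|x)=\prod_{i=1}^{rl}p(y_i|x_i)$. Writing $I(X;Y)=H(Y)-\sum_i H(Y_i|X_i)$ and applying subadditivity $H(Y)\le \sum_i H(Y_i)$ yields $I(X;Y)\le \sum_{i=1}^{rl} I(X_i;Y_i)$. Each term is the mutual information of a single use of the Bi-AWGN channel under the induced binary input distribution; since this channel is output-symmetric, its capacity is attained by the uniform input and equals $C_2(\gamma_t)$, so $I(X_i;Y_i)\le C_2(\gamma_t)$ for every $i$. Hence $I(K;\tilde{T}^r)\le rl\, C_2(\gamma_t)$. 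Substituting $rl=n/R_c(r)$ from $R_c(r)=n/(rl)$ gives $H(K|\tilde{T}^r)\ge n - (n/R_c(r))C_2(\gamma_t)=n\left(1-R_c(r)^{-1}C_2(\gamma_t)\right)$, and since the inequality holds for every fixed $s^r$ it survives averaging over the observed source messages.

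The only genuinely delicate point is the per-symbol capacity bound: the input distribution induced by a uniform key through the \emph{nonlinear} hash need not itself be uniform, so one cannot claim that each symbol achieves capacity. The argument instead relies on $C_2(\gamma_t)$ being the \emph{maximum} of $I(X_i;Y_i)$ over all input distributions, which is legitimate precisely because the Bi-AWGN channel is symmetric so that its symmetric (uniform-input) capacity coincides with its true capacity. Everything else is a routine chaining of the data processing inequality with the subadditivity of entropy.
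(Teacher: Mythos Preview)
Your proposal is correct and follows the same route as the paper: write $H(K|\tilde{T}^r)=H(K)-I(K;\tilde{T}^r)$, use $H(K)=n$, and bound the mutual information by the aggregate capacity $rl\,C_2(\gamma_t)=nR_c(r)^{-1}C_2(\gamma_t)$ of the artificially-introduced Bi-AWGN channel. The paper compresses the middle step into a single sentence (``the mutual information per channel use \ldots\ is upper bounded by the channel capacity''), whereas you explicitly spell out the Markov chain, the data processing inequality, the single-letterization via memorylessness and subadditivity, and the symmetric-channel observation that $C_2(\gamma_t)$ dominates $I(X_i;Y_i)$ for \emph{any} induced input law; this added care is welcome but does not constitute a different argument.
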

\begin{proof}
As the mutual information per channel use between the observation at the side of Eve and the shared key $\frac{1}{n}I(K; \tilde{T}^r)$ is upper bounded by the channel capacity, his/her equivocation about $K$ when Eve observed various realizations of $\tilde{T}^r$ can be lower bounded as
\begin{eqnarray}
  H( K | \tilde{T}^r ) &=& H(K) - I(K; \tilde{T}^r),  \nonumber \\
                     &\ge& H(K) - n {R_c(r)}^{-1} C_2(\gamma_t) \nonumber \\
                     &=& n\left(1-{R_c(r)}^{-1}  C_2(\gamma_t)\right).
\end{eqnarray}
\end{proof}

 Let
 \begin{equation}
    \label{eq:lowb}
    \delta = 1- {R_c(r)}^{-1} C_2(\gamma_t),
 \end{equation}
it follows that $H(K |\tilde{T}^r) \ge \delta H(K)$. Hence, the successful probability for an eavesdropper to guess the key is about $2^{-\delta n}$.

Clearly, $\delta$ is a lower bound on the normalized equivocation (relative to the entropy of key).

\section{A Pragmatic Approach for the Analysis of ANA-MACs}
For the design of ANA-MACs, one should carefully balance the three performance metrics, namely, the successful authentication probability, the false acceptance probability and the security against spoofing attacks. For simplicity, we focus on the spoofing attack of order-1 and $R_c(1) \le 1$, in which a channel coding formulation makes sense.

As shown in (\ref{eq:cheat}), the conditional cheating probability is determined by the opponent's capability to compute $P(k|\tilde{T}=\tilde{t}), \forall k\in \mathcal{K}$. Hence, a tractable metric for the security against spoofing attacks can be chosen to be the conditional equivocation about the key $H(K|\tilde{T})$.

With a channel coding formulation for MACs, we now show that it is possible to provide a design guideline for balancing the three performance metrics of ANA-MACs. We start with a brief review of some basic concepts of channel coding.

A binary $(l, M, d)$ code represents a binary code with length $l$, size $M = |C|$, and minimum Hamming distance $d$. An equidistant code (of length $l$ and distance $d$) is a set $C$ of vectors of length $l$ (called codewords), such that $d(x, y)=d$ for all distinct $x, y \in C$.

The distance distribution of a binary code $C$ of length $l$ is defined to be the $(l + 1)$-tuple $(A_0(C), A_1(C),... , A_l(C))$,
where $A_i(C)$ denotes the mean number of codewords at Hamming distance $i$ from a fixed codeword.

A code $C$ is said to be distance invariant if the number of codewords at distance $i$ from a fixed codeword only depends on $i$ and not on the particular word chosen.

Given $s\in \mathcal{S}$ and an ANA-MAC, let us first suppose that the underlying MAC $C \triangleq \mathcal{C}(s)$ is an equidistant code.

\subsection{Equidistant MACs}
\begin{lem}[Semakov and Zinoviev \cite{Semakov}]
An optimal binary equidistant $(l, M, d)$ code exists if and only if there exists
a resolvable balanced incomplete block design (BIBD) with parameters $v = M, k = M/2, \lambda = l-d, r = l$.
\end{lem}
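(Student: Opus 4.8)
The plan is to set up an explicit dictionary between the $M$ codewords of the code and the $M$ points of the design, with each of the $l$ coordinate positions giving rise to one parallel class consisting of two complementary blocks. The first step is to pin down what ``optimal'' must mean here by a standard double-counting of pairwise distances. Writing $a_j$ for the number of codewords carrying a $1$ in coordinate $j$, I would compute the total pairwise distance in two ways,
\begin{equation*}
  \binom{M}{2} d \;=\; \sum_{i<i'} d(c_i,c_{i'}) \;=\; \sum_{j=1}^{l} a_j (M-a_j),
\end{equation*}
since a pair contributes to coordinate $j$ exactly when its two entries differ there. As $a_j(M-a_j)\le M^2/4$ with equality iff $a_j=M/2$, this yields the Plotkin-type bound $2(M-1)d\le lM$, and an optimal equidistant code is precisely one meeting this bound with equality. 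Equality forces $a_j=M/2$ in every coordinate, i.e.\ each position is perfectly balanced (in particular $M$ is even, so $M/2$ is an integer).

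For the forward direction, given an optimal code I would take the point set to be the index set $\{1,\dots,M\}$ of codewords and, for each coordinate $j$, form the two blocks $B_j^0=\{i:(c_i)_j=0\}$ and $B_j^1=\{i:(c_i)_j=1\}$. These partition the points, so the pair $\{B_j^0,B_j^1\}$ is a parallel class; ranging over $j$ gives $l$ parallel classes and $2l$ blocks in all. Balance makes every block have size $M/2=k$; each point lies in exactly one block per class and hence in $r=l$ blocks; and two codewords $c_i,c_{i'}$ occupy a common block in coordinate $j$ exactly when they agree there. Since they are at distance $d$ they agree in $l-d$ positions, so every pair of points is covered by exactly $\lambda=l-d$ blocks. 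This is the required resolvable BIBD.

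For the converse I would reverse the construction: a resolvable design with these parameters has $v/k=2$ blocks per parallel class and $r=l$ classes, so I label the classes $1,\dots,l$ and, fixing an arbitrary orientation of the two blocks in each class, set the $j$-th coordinate of the codeword attached to point $i$ to $0$ or $1$ according to which block of class $j$ contains $i$. Two codewords then agree in coordinate $j$ iff the two points share a block of class $j$, which by the balance property happens for exactly $\lambda=l-d$ of the $l$ classes; hence any two distinct codewords are at distance exactly $d$, giving an equidistant $(l,M,d)$ code. Distinctness of codewords is automatic: two distinct points sharing a block in every one of the $l$ classes would be covered $l$ times rather than the prescribed $\lambda=l-d$, impossible since $d\ge1$. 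Finally each block having size $M/2$ means every coordinate is balanced, so the Plotkin bound is met with equality and the code is optimal.

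The main obstacle, and the conceptual heart of the lemma, is the first step: correctly identifying ``optimal'' with the equality case of the Plotkin-type bound and hence with per-coordinate balance --- without balance the coordinate partitions do not yield equal-size blocks and one obtains no BIBD at all. A minor subtlety in the converse is the arbitrary choice of which block in each parallel class is labelled $0$; reversing this choice merely complements that coordinate, which preserves all pairwise Hamming distances, so the resulting code is well defined up to coordinate complementation. It is worth recording as a consistency check that the standard BIBD identity $\lambda(v-1)=r(k-1)$ specializes here to $(l-d)(M-1)=l(M/2-1)$, which is exactly the Plotkin equality $2(M-1)d=lM$; thus the combinatorial balance built into the BIBD axioms already encodes the optimality of the code.
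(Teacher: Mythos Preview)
The paper does not prove this lemma; it is quoted verbatim as a classical result of Semakov and Zinoviev with a bare citation and no argument. So there is nothing in the paper to compare your proof against.

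That said, your proposal is the standard and correct proof. The double-counting identification of ``optimal'' with the equality case $2(M-1)d=lM$ (forcing every column to be balanced) is exactly right, and the bijection you describe---codewords $\leftrightarrow$ points, coordinate positions $\leftrightarrow$ parallel classes, the two supports in each column $\leftrightarrow$ the two blocks of that class---is the intended dictionary. Your consistency check via $\lambda(v-1)=r(k-1)$ is a nice touch, and your handling of codeword distinctness in the converse is clean. One small point you might add for completeness: in the forward direction the $2l$ blocks need not be pairwise distinct (two coordinates could induce the same partition of the codewords), so the design should be understood as a multiset of blocks, which is the usual convention for BIBDs in this context.
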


For binary equidistant $(l, M, d)$ code, the distance takes the value of
\begin{eqnarray}
   d_{opt}=\frac{Ml}{2(M-1)}=\frac{l+1}{2}
\end{eqnarray}
if $d_{opt}$ is an integer. If $d_{opt}$ is not an integer, i.e. the equidistant code is not optimal, then the code with
$d =\lfloor d_{opt}\rfloor$ is called a good equidistant code. Some constructions of good equidistant codes from balanced arrays and nested BIBDs were described in \cite{Sinha}.

Suppose now that the underlying MACs employed in ANA-MACs are ($l, 2^n, d$) equidistant codes. Then, it is possible to compute the three performance metrics.

Firstly, the use of equidistant MACs can facilitate the computation of the successful authentication probability $1-\alpha$ and the false acceptance probability $\beta$. According to the decision metric of (\ref{eq:decMetric}) and further setting
\begin{equation*}
   \varrho=\rho l,
\end{equation*}
it follows that
\begin{eqnarray}
\label{eq:alpha-ed}
   \alpha &=& 1 - P_D =  Q\left(\frac{\bar{\eta}_0-\varrho}{\sigma_{H_0}}\right) = Q\left(\sqrt{2\gamma_t l}(1-\rho)\right) \nonumber \\
    &=&  Q\left(\sqrt{2\gamma_b n}(1-\rho)\right)  \nonumber \\
    &\triangleq& Q\left(\sqrt{2\gamma_b G}\right)
\end{eqnarray}
where $G \triangleq (1-\rho)^2$, $\gamma_b \triangleq R_c^{-1} \gamma_t$ and
\begin{eqnarray}
\label{eq:beta-ed}
   \beta &=&  \Pr\left(\eta = \bar{\mu}^T \tilde{t} \ge \rho l\right) \nonumber \\
                &=&  Q\left(\sqrt{2\gamma_t l}\left(2\delta_d-(1-\rho)\right)\right) \nonumber \\
                &=&  Q\left(\sqrt{2\gamma_b n}\left(2\delta_d-(1-\rho)\right)\right).
\end{eqnarray}

For example, Let us consider the special case of $\beta = \alpha$. According to (\ref{eq:alpha-ed}) and (\ref{eq:beta-ed}), this means that
\begin{equation}
\label{eq:d}
   \delta_{d,l} \triangleq \frac{d}{l}= \sqrt{\frac{G}{n}}.
\end{equation}

The conditional equivocation about the key $H(K|\tilde{T})$ can be well evaluated by the lower bound proposed in Theorem 6. For equidistant MACs, we can provide a heuristic approximation method to evaluate it, which shows an explicit connection between $H(K|\tilde{T})$ and $d$.

\begin{tem}
For an ANA-MAC with the use of $(l, 2^n, d)$ equidistant codes for the underlying MACs, the conditional equivocation about the key when the opponent has accessed a single tag can be approximated as
\begin{eqnarray}
\label{eq:happrox}
    H(K|\tilde{T}) \approx  n - 4 \ln(2)^{-1} R_c \gamma_b d .
\end{eqnarray}
\end{tem}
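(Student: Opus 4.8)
The plan is to start from the exact a-posteriori representation of the conditional entropy and then introduce a single, clearly identified approximation. Since the key is uniform, $H(K)=n$ and $p(k\mid\tilde t)=p(\tilde t\mid k)/\sum_{k'}p(\tilde t\mid k')$, so that
\begin{equation*}
  H(K\mid\tilde T)=E_{k,\tilde t}\left[\log_2\!\left(1+\sum_{k'\neq k}\frac{p(\tilde t\mid k')}{p(\tilde t\mid k)}\right)\right].
\end{equation*}
First I would insert the Gaussian likelihood $p(\tilde t\mid k)\propto\exp[-\|\tilde t-\bar c_k\|^2/(2\sigma_w^2)]$ with $c_k=\hbar(k,s)$, and write $\tilde t=\bar c_k+w$. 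A short computation gives, for each $k'\neq k$,
\begin{equation*}
  \frac{p(\tilde t\mid k')}{p(\tilde t\mid k)}=\exp\!\left[-\frac{\|\bar c_k-\bar c_{k'}\|^2+2\,w^T(\bar c_k-\bar c_{k'})}{2\sigma_w^2}\right].
\end{equation*}

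Next I would invoke the equidistance hypothesis. Since each differing bit of a bipolar pair contributes $4$ to the squared Euclidean distance, $\|\bar c_k-\bar c_{k'}\|^2=4\,d_H(c_k,c_{k'})=4d$ for every $k'\neq k$. Writing $\lambda\triangleq\exp(-2d/\sigma_w^2)$ and $\Delta_{k,k'}\triangleq\bar c_k-\bar c_{k'}$, the posterior ratio collapses to $\lambda\exp(-w^T\Delta_{k,k'}/\sigma_w^2)$, so that all $2^n-1$ cross terms share the common factor $\lambda$ and differ only through the zero-mean Gaussian inner products $w^T\Delta_{k,k'}$.

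The single approximation is to replace each cross term by its typical value, i.e.\ to set the fluctuation $w^T\Delta_{k,k'}$ to its mean $0$ (equivalently, to represent a term by the exponential of $E[\log(\cdot)]$, which equals exactly $\lambda$). This yields $\sum_{k'\neq k}(\cdot)\approx(2^n-1)\lambda$, hence $H(K\mid\tilde T)\approx\log_2\!\big(1+(2^n-1)\lambda\big)$. Keeping the dominant term for $2^n\lambda\gg1$ and using $2^n-1\approx2^n$ gives $H(K\mid\tilde T)\approx n+\log_2\lambda$. Finally I would substitute $\log_2\lambda=-\tfrac{2d}{\sigma_w^2}\log_2 e=-4\ln(2)^{-1}\gamma_t d$ together with $\gamma_t=R_c\gamma_b$ to obtain the claimed $H(K\mid\tilde T)\approx n-4\ln(2)^{-1}R_c\gamma_b d$.

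The main obstacle is justifying that last approximation, and it is genuinely delicate: the \emph{arithmetic} mean of a single cross term is not $\lambda$ but exactly $1$, because $E[\exp(-w^T\Delta_{k,k'}/\sigma_w^2)]=\exp(2d/\sigma_w^2)=\lambda^{-1}$. Consequently, naively pushing the expectation inside the sum would return the trivial value $H(K\mid\tilde T)=n$ (consistent with the Jensen upper bound $H(K\mid\tilde T)\le\log_2(1+E[S])=n$), and the entire nontrivial correction $-4\ln(2)^{-1}R_c\gamma_b d$ originates from the gap between the arithmetic mean and the typical (geometric-mean) value of the terms. The heuristic is therefore to use the geometric-mean representative $\lambda$ rather than the arithmetic mean; this is precisely why the statement claims an approximation rather than an identity, and it is expected to be accurate in the strong-noise regime (small $\gamma_t$, hence $\lambda\to1$) in which ANA-MACs are designed to operate.
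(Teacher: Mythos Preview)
Your derivation reaches the same formula and is internally consistent, but the approximation you make is not the one the paper uses. You work with the single-term identity $H(K\mid\tilde T)=E_{k,\tilde t}[\log_2 Z]$, $Z=1+\sum_{k'\neq k}e^{-l_{k'}}$, and then replace each summand by its \emph{geometric} mean $\lambda=e^{-E[l_{k'}]}$ (equivalently, set $w^T\Delta_{k,k'}=0$). The paper instead starts from the two-term expansion
\[
H(K\mid\tilde T=\tilde t)=\log_2 Z+\frac{1}{\ln 2}\,\frac{\sum_{k'}l_{k'}e^{-l_{k'}}}{Z},
\]
and applies a law-of-large-numbers heuristic with the \emph{arithmetic} mean to both sums: $\sum_{k'\neq k}e^{-l_{k'}}\approx(2^n-1)\,E[e^{-l_{k'}}]=2^n-1$ and $\sum_{k'\neq k}l_{k'}e^{-l_{k'}}\approx(2^n-1)\,E[l_{k'}e^{-l_{k'}}]=-(2^n-1)\,4\gamma_t d$ (using $l_{k'}\sim\mathcal N(4\gamma_t d,\,8\gamma_t d)$, so $E[e^{-l}]=1$ and $E[l\,e^{-l}]=-4\gamma_t d$). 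The first term then gives exactly $n$, and the entire correction $-4\ln(2)^{-1}\gamma_t d$ emerges from the second ``energy'' term.

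The practical difference is precisely the point you flag as delicate: in the paper's decomposition the arithmetic-mean LLN heuristic is self-consistent and needs no argument about why a geometric-mean proxy should replace the arithmetic mean; the nontrivial part simply migrates to the second term. Your route is shorter and very explicit about where the single heuristic step sits, but its justification (typical value versus mean for log-normal summands) is less standard. Both approaches remain heuristic because the $l_{k'}$ are correlated through the common noise $w$; the paper itself notes later that the approximation can be loose for this reason.
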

\begin{proof}
 Consider that a secret key $k$ shared between Alice and Bob is used to select a MAC codeword $t$, which is further corrupted by artificial noise to form an ANA-MAC codeword $\tilde{t}$. When Eve receives $\tilde{t}$, she can calculate $2^n$ posteriori probabilities  $P(k'|\tilde{t}^r), k' \in \mathcal{K}$, or $2^n$  log-likelihood ratios
\begin{eqnarray}
   l_k(k')=\log \frac{ P(k|\tilde{t})}{ P(k'|\tilde{t})} =  \frac{1}{\sigma_w^2} \sum_{i=1}^{n} \tilde{t}_i [\bar{t}_i(k)-\bar{t}_i(k')], k' \in \mathcal{K}
\end{eqnarray}
Clearly, $l_k(k)=0$. For equidistant MACs with (Hamming) distance $d$, we have that
\begin{eqnarray}
   d_H(t(k),t(k')) = d, \forall k' \neq k.
\end{eqnarray}
Therefore, it is straightforward to compute the mean and variance of $l_k(k')$ for $\forall k' \neq k$ as
\begin{eqnarray}
   E\{l_k(k')\} &=& \frac{2}{\sigma_w^2}d=4\gamma_t d, \nonumber \\
   \text{Var}\{l_k(k')\} &=& \frac{4}{\sigma_w^4}d \sigma_w^2 = \frac{4}{\sigma_w^2}d = 8 \gamma_t d .
\end{eqnarray}

In what follows, we denote $l_k(k')$ by $l_{k'}$ for simplicity. The posteriori probabilities can now be written as
\begin{equation}
    P(k'|\tilde{t}) = e^{-l_{k'}} P(k|\tilde{t}),
\end{equation}
or
\begin{equation}
    P(k'|\tilde{t}) = \frac{e^{-l_{k'}}}{1+\sum_{k'\neq k}e^{-l_{k'}}}.
\end{equation}

\begin{eqnarray*}
    H(K|\tilde{T}) &=&  E\{H(K|\tilde{T}=t)\} \nonumber \\
                   &=&  E\left\{-\sum_{k' \in \mathcal{K}} P(k'|\tilde{t}) \log_2 {P(k'|\tilde{t})} \right\} \nonumber \\
                     &=& E\left\{\log_2\left(1 + \sum_{i=1}^{2^n-1} e^{-l_i}\right)\right\}  \nonumber \\
                             &&+ {\ln(2)}^{-1} \cdot E\left\{\frac{\sum_{i=1}^{2^n-1} l_i e^{-l_i}}{1+\sum_{i=1}^{2^n-1} e^{-l_i}}\right\}.
\end{eqnarray*}

Since $2^n$ is practically very large ($2^{128}$ for $n=128$), the sum of $2^n$ identically-distributed random variables converges to the sum of their mean values, namely,
\begin{eqnarray*}
  \sum_{i=1}^{2^n-1} e^{-l_i} \approx \sum_{i=1}^{2^n-1} E\{e^{-l_i}\}=2^n-1, \\
  \sum_{i=1}^{2^n-1} l_i e^{-l_i} \approx \sum_{i=1}^{2^n-1} E\{l_i e^{-l_i}\} = -4\gamma_t d(2^n-1).
\end{eqnarray*}
Hence, one finally have that
\begin{eqnarray*}
    H(K|\tilde{T})   &\approx& n - \ln(2)^{-1} \frac{4\gamma_t d (2^n-1)}{2^n} \nonumber \\
                     &\approx& n - 4 \ln(2)^{-1} R_c \gamma_b d .
\end{eqnarray*}
\end{proof}

As expected, the conditional equivocation increases when the noise variance increases. For ANA-MACs, one has to consider both the successful authentication probability $1-\alpha$ and the false acceptance probability $\beta$, which, however, is closely related to the noise variance. Therefore, it is of importance to balance these requirements.

\subsection{General Case}
From coding theory, it is well known that the number of codewords for equidistant codes is very limited, which often results into a very low coding rate.

For a binary code $C$ of length $l$ having $s$ distances,  a general result by Delsarte \cite{Delsarte} implies that
\begin{eqnarray}
   |C| \le \sum_{i=0}^s {l \choose i}.
\end{eqnarray}

It should be pointed out that the derivations of (\ref{eq:beta-ed}) and (\ref{eq:happrox}) require the property of distance invariant for the underlying codes, since we cannot assume the use of a particular key between Alice and Bob. Fortunately, Delsarte told us how to decide if a code is distance invariant.

\begin{lem}[Distance Invariant \cite{Delsarte}]
Let $C$ be a code for which the number $s$ of distances is at most equal to the dual distance $d'$. Then $C$ is distance invariant.
\end{lem}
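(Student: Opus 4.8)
The plan is to prove the lemma via the harmonic analysis of the binary Hamming association scheme $H(l,2)$, reducing distance invariance to the unique solvability of a small linear system whose coefficients are Krawtchouk polynomials. First I would fix notation: let $\chi_C\in\mathbb{R}^{2^l}$ be the indicator vector of $C$, let $K_k(\cdot)$ be the binary Krawtchouk polynomial of degree $k$, and let $E_0,\dots,E_l$ be the primitive idempotents of the scheme, so that $(2^l E_k)_{x,y}=K_k(d(x,y))$. For a fixed codeword $x$ let $B_i(x)$ be the number of codewords at distance $i$ from $x$; then $C$ is distance invariant exactly when each $B_i(x)$ is independent of $x$. Recall also that the dual distance $d'$ is the least $k\ge 1$ with $A'_k\neq 0$, where $A'_k=\tfrac{1}{|C|}\sum_i A_i(C)K_k(i)$ is the MacWilliams transform of the distance distribution.

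The first step is a transform identity. For $x\in C$ set $a_k(x)=\sum_{y\in C}K_k(d(x,y))$; by the idempotent formula $a_k(x)=2^l(E_k\chi_C)_x$, and summing over $x\in C$ gives $\sum_{x\in C}a_k(x)=\chi_C^{\top}(2^l E_k)\chi_C=|C|^2 A'_k$. Because each $E_k$ is an orthogonal projector, $\chi_C^{\top}E_k\chi_C=\|E_k\chi_C\|^2\ge 0$, so $A'_k=0$ forces $E_k\chi_C=0$ and hence $a_k(x)=0$ for every $x\in C$. By the definition of $d'$, this vanishing holds for all $k$ with $1\le k\le d'-1$.

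The second step converts this into a linear system for the distance multiplicities. Let $d_1<\cdots<d_s$ be the $s$ nonzero distances present in $C$, so $B_i(x)=0$ for $i\notin\{0,d_1,\dots,d_s\}$ and $B_0(x)=1$. Expanding $a_k(x)=\sum_i B_i(x)K_k(i)$ and using the vanishing above together with $k=0$ (where $K_0\equiv 1$ gives $\sum_i B_i(x)=|C|$), I obtain for each $x$
\begin{equation*}
  \sum_{j=1}^{s}B_{d_j}(x)\,K_k(d_j)=
  \begin{cases} |C|-1, & k=0,\\ -K_k(0), & 1\le k\le s-1, \end{cases}
\end{equation*}
which is an $s\times s$ system whose right-hand side does not depend on $x$; the hypothesis $s\le d'$ is precisely what makes the equations up to $k=s-1\le d'-1$ available. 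The coefficient matrix $[K_k(d_j)]_{0\le k\le s-1,\,1\le j\le s}$ is nonsingular, since $K_0,\dots,K_{s-1}$ span the polynomials of degree at most $s-1$ and no nonzero such polynomial can vanish at the $s$ distinct points $d_1,\dots,d_s$. Hence the system has a unique solution, necessarily independent of $x$, so $B_{d_j}(x)=A_{d_j}(C)$ for all $x$ and $C$ is distance invariant.

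The routine linear algebra is not the difficult part; the crux is correctly invoking the association-scheme structure, namely the idempotent identity $(2^l E_k)_{x,y}=K_k(d(x,y))$ and the positive semidefiniteness of $E_k$ that yields the implication $A'_k=0\Rightarrow E_k\chi_C=0$. Once these are in place the role of the hypothesis is transparent: $s\le d'$ supplies exactly enough vanishing-moment equations to determine the $s$ distance multiplicities uniquely, and if $s$ were to exceed $d'$ the system would be underdetermined and distance invariance could fail.
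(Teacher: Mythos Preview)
The paper does not supply its own proof of this lemma; it is quoted as a classical result of Delsarte and left unproved. Your argument is correct and is essentially Delsarte's original proof: you work in the Bose--Mesner algebra of the Hamming scheme, use the primitive idempotents to show that $A'_k=0$ forces $E_k\chi_C=0$ (hence the ``moment'' $a_k(x)=\sum_{y\in C}K_k(d(x,y))$ vanishes for every $x\in C$ and every $1\le k\le d'-1$), and then combine these $d'-1$ equations with the trivial $k=0$ count to obtain an $s\times s$ linear system in the $B_{d_j}(x)$ whose coefficient matrix $[K_k(d_j)]$ is nonsingular by the degree argument. The hypothesis $s\le d'$ is used exactly where you say, to guarantee that enough vanishing equations are available. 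One cosmetic remark: the right-hand side $-K_k(0)$ for $1\le k\le s-1$ is just $-\binom{l}{k}$, so the $x$-independence of the right-hand side is immediate; you might state this explicitly. Otherwise nothing is missing, and this is precisely the route Delsarte takes in the cited reference.
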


Unfortunately, it still remains a challenge for design of such distant-invariant codes in practice.

For ANA-MACs, the complexity-theoretic MACs are employed, which can be seen as random codes, due to their pseudorandomness property. Empirically, we claim that the complexity-theoretic MACs are distance-invariant thanks to their inherent pseudorandomness, as verified by extensive numerical results shown in Section-VI.

For the set of random codes of rate $R_c$, it is well known that
\begin{equation}
   A_d = {l \choose d}2^{-l(1-R_c)},
\end{equation}
where $A_d$ denotes the mean number of codewords at Hamming distance $d$ from a fixed codeword.

Then, according to (\ref{eq:beta-ed}) and (\ref{eq:beta}), it is straightforward to show that
\begin{eqnarray}
\label{eq:beta}
   \beta =  \sum_{d>0} \frac{A_d}{2^n} Q\left(\sqrt{2\gamma_b n}\left(2\delta_d-(1-\rho)\right)\right),
\end{eqnarray}
while the successful authentication probability (\ref{eq:alpha-ed}) remains unchanged.

\begin{tem}
For an ANA-MAC with the use of $(l, 2^n)$ MACs, the conditional equivocation about the key when the opponent has accessed a single tag can be approximated as
\begin{eqnarray}
\label{eq:eqv-approx}
    H(K|\tilde{T}) \approx  n - 4 \ln(2)^{-1} R_c \gamma_b \cdot \bar{d} .
\end{eqnarray}
where $\bar{d}=\left(2^{-n}\sum_d d A_d\right)$.
\end{tem}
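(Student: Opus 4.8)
The plan is to mirror the argument of Theorem 7, replacing the single Hamming distance $d$ of the equidistant case by the full distance distribution $\{A_d\}$ of the underlying MAC code. As before, a shared key $k$ selects a codeword $\bar{t}(k)$, which is corrupted by artificial noise to produce $\tilde{t}$, and Eve forms the $2^n$ log-likelihood ratios $l_{k'}=\log[P(k|\tilde{t})/P(k'|\tilde{t})]=\sigma_w^{-2}\sum_{i}\tilde{t}_i[\bar{t}_i(k)-\bar{t}_i(k')]$, so that $H(K|\tilde{T})$ admits the same two-term expression
\begin{equation*}
H(K|\tilde{T})=E\left\{\log_2\left(1+\sum_{k'\neq k}e^{-l_{k'}}\right)\right\}+\ln(2)^{-1}E\left\{\frac{\sum_{k'\neq k}l_{k'}e^{-l_{k'}}}{1+\sum_{k'\neq k}e^{-l_{k'}}}\right\}.
\end{equation*}
The essential change is that $d_H(t(k),t(k'))$ is no longer constant across $k'$; a codeword at Hamming distance $d$ now yields $l_{k'}\sim\mathcal{N}(4\gamma_t d,\,8\gamma_t d)$, exactly as in the equidistant computation but with the appropriate $d$.

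First I would record two moment identities that hold for each summand \emph{individually}. Since each $l_{k'}$ is Gaussian with variance exactly twice its mean, $E\{e^{-l_{k'}}\}=e^{-4\gamma_t d+4\gamma_t d}=1$ independently of $d$, while differentiating the moment-generating function gives $E\{l_{k'}e^{-l_{k'}}\}=-4\gamma_t d$. The first identity is what makes the denominator self-average to $2^n-1$ (every codeword contributes $1$, regardless of its distance), and the second is what injects the distance dependence into the numerator.

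Next I would invoke the distance-invariance property (Lemma 4, claimed empirically for complexity-theoretic MACs in the sequel) so that, independently of which key $k$ is actually in force, the number of other codewords at distance $d$ equals $A_d$. Grouping the summands by distance then gives $\sum_{k'\neq k}E\{l_{k'}e^{-l_{k'}}\}=-4\gamma_t\sum_{d>0}dA_d=-4\gamma_t\,2^n\bar{d}$, where the $d=0$ term drops out. Substituting the two self-averaged sums into the displayed expression and using $R_c\gamma_b=\gamma_t$ yields
\begin{equation*}
H(K|\tilde{T})\approx\log_2(2^n)+\ln(2)^{-1}\frac{-4\gamma_t\,2^n\bar{d}}{2^n}=n-4\ln(2)^{-1}R_c\gamma_b\,\bar{d},
\end{equation*}
as claimed.

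The main obstacle is the self-averaging step, which is more delicate here than in Theorem 7. In the equidistant case all $2^n-1$ variables $l_{k'}$ are identically distributed, so the concentration of $\sum e^{-l_{k'}}$ and $\sum l_{k'}e^{-l_{k'}}$ about their means follows from the ordinary law of large numbers. In the general case the summands carry distance-dependent (hence non-identical) laws, so replacing each random sum by its expectation requires a law of large numbers for non-identically distributed terms and an argument that no small family of near distances dominates the sum. I would justify this by bounding the variance of each normalized sum and showing it vanishes as $n\to\infty$ under the distance distribution $\{A_d\}$; for the random-code-like MACs of interest the weights $A_d=\binom{l}{d}2^{-l(1-R_c)}$ are sharply concentrated near $d=l/2$, so the heuristic replacement is asymptotically exact and $\bar{d}$ is the only moment of $\{A_d\}$ that survives.
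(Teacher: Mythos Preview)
Your proposal is correct and follows essentially the same route as the paper's own proof: decompose $H(K|\tilde T)$ into the two-term expression inherited from Theorem~7, partition the keys $k'\neq k$ by their Hamming distance to $t(k)$, compute the per-term moments $E\{e^{-l_{k'}}\}=1$ and $E\{l_{k'}e^{-l_{k'}}\}=-4\gamma_t d$, and invoke a law-of-large-numbers replacement of the random sums by their means. Your write-up is in fact slightly more explicit than the paper's in justifying $E\{e^{-l_{k'}}\}=1$ via the Gaussian moment-generating function and in flagging that the self-averaging step now involves non-identically distributed summands, but the argument and its level of rigor are otherwise the same.
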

\begin{proof} Let $\mathcal{K}(d)$ denote the set of keys with which the generated tags are at Hamming distance $d$ from the tag with $k$.  Clearly, $\bigcup_{d\ge 0} \mathcal{K}(d) = \mathcal{K}$. Hence,
\begin{eqnarray*}
    H(K|\tilde{T}) &=&  E\{H(K|\tilde{T}=t)\} \nonumber \\
                   &=& E\left\{\log_2\left(1 + \sum_{k'\in \mathcal{K}/k} e^{-l_k'}\right)\right\} \nonumber \\
                    && + \ln(2)^{-1} E\left\{\frac{\sum_{k'\in \mathcal{K}/k} l_k' e^{-l_k'}}{1+\sum_{k'\in \mathcal{K}/k} e^{-l_k'}}\right\},
\end{eqnarray*}
where
\begin{eqnarray*}
     E\left\{1 + \sum_{k'\in \mathcal{K}/k} e^{-l_k'} \right\} &=& 1 + \sum_{d\ge 1} E\left\{ \sum_{k' \in \mathcal{K}(d)}e^{-l_k'}\right\}  \\
       &\approx& 2^n,
\end{eqnarray*}
and
\begin{eqnarray*}
     E\left\{\sum_{k'\in \mathcal{K}/k} l_k' e^{-l_k'} \right\} &=& \sum_{d\ge 1} E\left\{ \sum_{k' \in \mathcal{K}(d)}l_k' e^{-l_k'}\right\}  \\
       &\approx&  \sum_{d\ge 1} A_d (-4\gamma_t d).
\end{eqnarray*}
\end{proof}

\section{Numerical Results}
We consider ANA-MACs, where the underlying MACs are constructed by the Rijndael block cipher \cite{Daemen:2002:DRA}. Hence, the underlying MACs in ANA-MACs allow the specification of variants with the block length ($l$) and key length ($n$) both ranging from 128 to 256 bits in steps of 32 bits.

\subsection{Empirical Distance Distribution of Complexity-Theoretic MACs}

To make sense a channel coding formulation for the  Rijndael-cipher based MACs, we use $n=128$ and $l=256$. Hence, the coding rate is $R_c=1/2$.

Given a $s\in \mathcal{S}$ and further fix a $k \in \mathcal{K}$, it is straightforward to generate authentication tags with $\forall k'\in \mathcal{K}/k$, and the Hamming distance between $\hbar(s,k')$ and $\hbar(s,k)$ can be numerically computed.

\begin{figure}[htb] 
   \centering
   \includegraphics[width=0.5\textwidth]{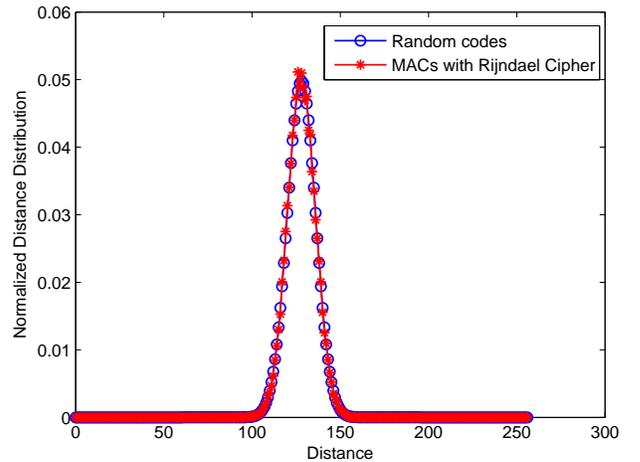} 
   \caption{Distance distribution for random codes and MACs with Rijndael cipher.}
   \label{fig:dis}
\end{figure}

This empirical distance distribution is shown in Fig. \ref{fig:dis}, which coincides well with the random codes of the same coding rate. Extensive numerical results show that this empirical distance distribution keeps unchanged for the use of $\forall s\in \mathcal{S}$ and $\forall k\in \mathcal{K}$. Hence, the distant-invariant property has been empirically confirmed, thanks to the psuedorandomness of the complexity-theoretic MACs.

\subsection{Fundamental Limits on the Key Recovery Attacks}
To attack ANA-MACs, an opponent tries to do her/his best to decode the key.

A fundamental limit on the opponent's capability on guessing the key is the conditional equivocation, $H(K|\tilde{T})$, which can be estimated by (\ref{eq:eqv-approx}).  With a random-code-like distance distribution, it is immediately to see that $\bar{d}=n$. Hence,
\begin{equation*}
 H(K|\tilde{T})= n (1- 4 (\ln 2)^{-1} R_c \gamma_b).
\end{equation*}
Numerically, we, however, found that it is often looser than the lower bound of (\ref{eq:lowb}). This is because that the law of large number holds only approximately when random variables being summed are dependent.

Fig. \ref{fig:eqv-128} shows the lower bound on the normalized conditional equivocation, as determined by (\ref{eq:lowb}). At $E_b/N_0=-3$ dB,  $H(K | \tilde{T}) > 53$.  Hence, the successful probability for an opponent with an unlimited power of computation to guess the key is about $2^{-53}$.
\begin{figure}[htb] 
   \centering
   \includegraphics[width=0.5\textwidth]{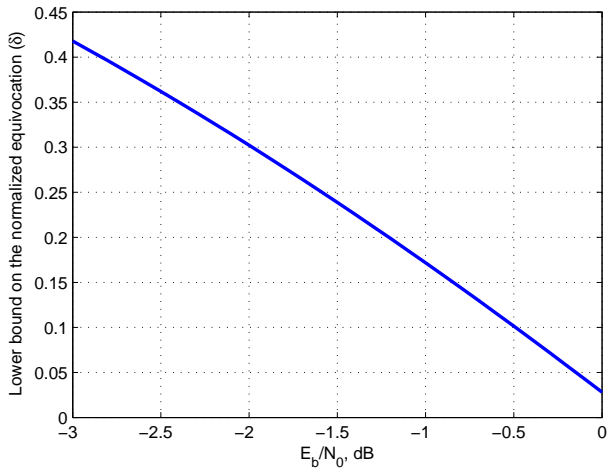} 
   \caption{Lower bound on the normalized equivocation.}
   \label{fig:eqv-128}
\end{figure}

\begin{figure}[htb] 
   \centering
   \includegraphics[width=0.5\textwidth]{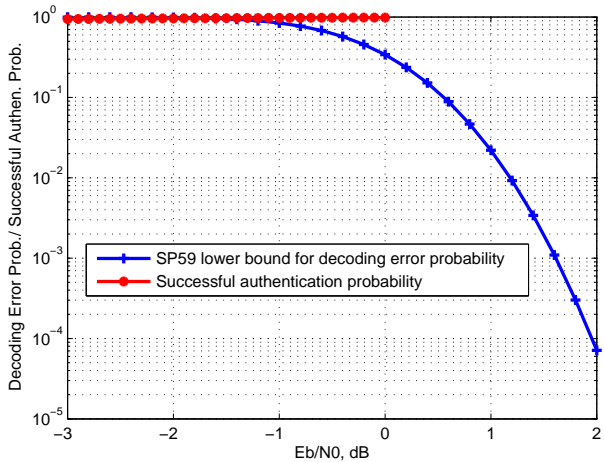} 
   \caption{The SP59 low bound on the decoding error probability and successful authentication probability.}
   \label{fig:spb}
\end{figure}

If the opponent choose to decode the key based on her/his observation of a single authentication tag, we can employ the SP59 lower bound for estimating her/his possibility to successfully decode the key.  Fig. \ref{fig:spb} shows the SP59 bound on the decoding error probability and successful authentication probability for different $E_b/N_0$'s.  As the opponent cannot do better than an ML decoder, the SP59 bound provides an over-estimate of its capability on guessing the key. As shown, the opponent becomes hopeless in guessing the key whenever $E_b/N_0$ is below about -1 dB, where the decoding error probability is around 1, while almost perfect successful authentication probability can still be achieved in this low SNR regime.

\subsection{Completeness Error vs. False Acceptance Probability}
The completeness error $\alpha$ is defined as the complement of the successful authentication probability, which is closely connected to the normalized threshold value $\rho$. By the theory of hypothesis testing, the completeness error and the false acceptance probability $\beta$ is fundamentally balanced with (\ref{eq:alphaBeta}).

\begin{figure}[htb] 
   \centering
   \includegraphics[width=0.5\textwidth]{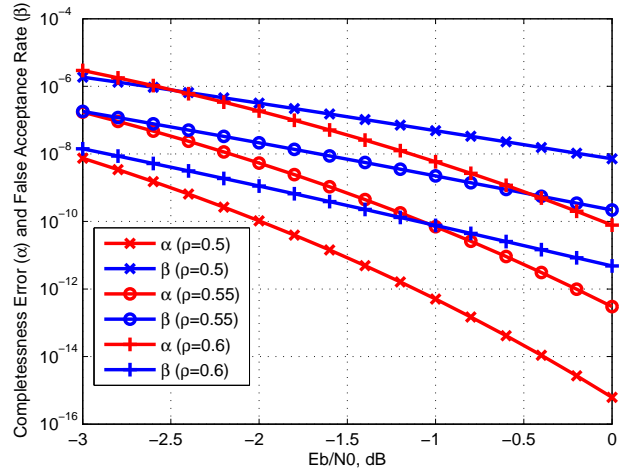} 
   \caption{Completeness error and false acceptance probability versus $E_b/N_0$ for different thresholds $\rho$.}
   \label{fig:alphaBeta}
\end{figure}
To  see the fine tradeoff between $\alpha$ and $\beta$, we plot them in Fig. \ref{fig:alphaBeta} for different $\rho$'s. As the conditional equivocation about the key increases when the SNR decreases, the variance of the artificial noise is essentially determined by the system requirement on the completeness error and false acceptant rate.

\subsection{The Effect of Quantization}
To facilitate packet transmission, quantization should be introduced for ANA-MACs. In most cases, 8-bit quantization is often enough for ANA-MACs and no obvious difference can be observed in simulations for both the successful authentication probability and false acceptance probability with or without quantization. For the conditional equivocation about the key, the introduction of quantization can in general increase it due to the data processing inequality and the opponent becomes more difficult for implementing any key-recovery attack.

\section{Conclusion}
We propose a channel coding approach for the key recovery problem encountered in the spoofing attacks of MACs. With this new approach, the computational security for MACs can be viewed as the requirement of exponential complexity for all possible decoders to succeed.

A new cryptographic primitive, namely, ANA-MACs, is proposed by employing the artificial noise to corrupt the complexity-theoretic MACs. This idea is shown to has some degree of information-theoretic security. The proposed ANA-MACs are similar to the recently-proposed physical layer authentication schemes, as both are interfered with noise. However, the proposed ANA-MACs come with the artificially-introduced noise, the amount of which can be well controlled to meet various performance metrics. This, however, is not the case for physical layer authentication schemes, where the noise is introduced by the channel.

With the introduction of quantization, the proposed ANA-MACs can be encapsulated in packets and transmitted above the physical layer just like that of the traditional MACs, which contrasts sharply with the existing physical layer authentication schemes. We hope that this research can bridge two closely-related but almost independently developed primitives, namely, information-theoretic authentication codes, and complexity-theoretic MACs.


\begin{thebibliography}{10}
\providecommand{\url}[1]{#1}
\csname url@samestyle\endcsname
\providecommand{\newblock}{\relax}
\providecommand{\bibinfo}[2]{#2}
\providecommand{\BIBentrySTDinterwordspacing}{\spaceskip=0pt\relax}
\providecommand{\BIBentryALTinterwordstretchfactor}{4}
\providecommand{\BIBentryALTinterwordspacing}{\spaceskip=\fontdimen2\font plus
\BIBentryALTinterwordstretchfactor\fontdimen3\font minus
  \fontdimen4\font\relax}
\providecommand{\BIBforeignlanguage}[2]{{%
\expandafter\ifx\csname l@#1\endcsname\relax
\typeout{** WARNING: IEEEtran.bst: No hyphenation pattern has been}%
\typeout{** loaded for the language `#1'. Using the pattern for}%
\typeout{** the default language instead.}%
\else
\language=\csname l@#1\endcsname
\fi
#2}}
\providecommand{\BIBdecl}{\relax}
\BIBdecl

\bibitem{CyptBook}
B.~Schneier, \emph{Applied Cryptography}.\hskip 1em plus 0.5em minus
  0.4em\relax New York: Wiley, 1996.

\bibitem{MacWilCodes}
E.~N. Gilbert, F.~J. MacWilliams, and N.~J.~A. Sloane, ``Codes which detect
  deception,'' \emph{Bell Syst. Tech. J.}, vol.~53, pp. 405--424, 1974.

\bibitem{Simmons}
G.~J. Simmons, ``A survey of information authentication,'' \emph{Proceedings of
  the IEEE}, vol.~76, pp. 603--620, 1988.

\bibitem{YuIFS}
P.~L. Yu, J.~S. Baras, and B.~M. Sadler, ``Physical-layer authentication,''
  \emph{IEEE Trans. Inf. Forensics Security}, vol.~3, pp. 38--51, Mar. 2008.

\bibitem{XiaoPHY}
L.~Xiao, L.~Greenstein, N.~Mandayam, and W.~Trappe, ``Using the physical layer
  for wireless authentication in time-variant channels,'' \emph{IEEE Trans.
  Wireless Commun.}, vol.~7, pp. 2571--2579, Jul. 2008.

\bibitem{DanPHY}
D.~Shan, K.~Zeng, W.~Xiang, P.Richardson, and Y.~Dong, ``{PHY-CRAM}: Physical
  layer challenge-response authentication mechanism for wireless networks,''
  \emph{IEEE J. Sel. Areas Commun.}, vol.~31, pp. 1817--1827, 2013.

\bibitem{TugnaitJSAC}
J.~K. Tugnait, ``Wireless user authentication via comparison of power spectral
  densities,'' \emph{IEEE J. Sel. Areas Commun.}, vol.~31, pp. 1791--1802,
  2013.

\bibitem{AMAC}
R.~Graveman and K.~Fu, ``Approximate message authentication codes,''
  \emph{Proc. 3rd Annu. Fedlab Symp. Advanced Telecommunications/Information
  Distribution}, vol.~1, Feb. 1999.

\bibitem{AMAC02}
L.~Xie and G.~Arce, ``Approximate image message authentication codes,''
  \emph{IEEE Trans. Image Process.}, vol.~2, pp. 242--252, Mar. 2002.

\bibitem{NTMAC}
C.~G. Boncelet, ``The {NTMAC} for authentication of noisy messages,''
  \emph{IEEE Trans. Inf. Forensics Security}, vol.~1, pp. 35--42, Mar. 2006.

\bibitem{KatzBook}
J.~Katz and Y.~Lindell, \emph{Introduction to Modern Cryptography}.\hskip 1em
  plus 0.5em minus 0.4em\relax Chapman and Hall/CRC, 2007.

\bibitem{Maurer}
U.~M. Maurer, ``Authentication theory and hypothesis testing,'' \emph{{IEEE}
  Trans. Inf. Theory}, vol.~46, pp. 1350--1356, Jul. 2000.

\bibitem{Walker}
M.~Walker, ``Information-theoretic bounds for authentication schemes,''
  \emph{J. Cryptology}, vol.~2, pp. 131--143, 1990.

\bibitem{Rosenbaum}
U.~Rosenbaum, ``A lower bound on authentication after having observed a
  sequence of messages,'' \emph{J. Cryptology}, vol.~6, pp. 135--156, 1993.

\bibitem{KIMAC}
J.~Alwen, M.~Hirt, U.~Maurer, A.~Patra, and P.~Raykov, ``Key-indistinguishable
  message authentication codes,'' \emph{Lecture Notes in Computer Science},
  vol. 8642, pp. 476--493, 2014.

\bibitem{Pei}
D.~Pei, ``Information-theoretic bounds for authentication codes and block
  designs,'' \emph{J. Cryptology}, vol.~8, pp. 177--188, 1995.

\bibitem{Andrea}
A.~Sgarro, ``Blind coding: Authentication frauds from the point of view of
  rate-distortion theory,'' \emph{Journal of Discrete Mathematical Sciences and
  Cryptography}, vol.~4, pp. 133--150, 2001.

\bibitem{Daemen:2002:DRA}
J.~Daemen and V.~Rijmen, \emph{The design of {Rijndael}: {AES} --- the
  {Advanced Encryption Standard}}.\hskip 1em plus 0.5em minus 0.4em\relax
  Spring{\-}er-Ver{\-}lag, 2002.

\bibitem{SPB1959}
C.~E. Shannon, ``Probability of error for optimal codes in a {G}aussian
  channel,'' \emph{Bell Syst. Tech. J.}, vol.~38, pp. 611--656, May 1959.

\bibitem{Semakov}
N.~V. Semakov and V.~A. Zinoviev, ``Equidistant q-ary codes with maximal
  distance and resolvable balanced incomplete block designs,'' \emph{Problemi
  Peredatchi Informatsii}, vol.~4, pp. 3--10, 1968.

\bibitem{Sinha}
K.~Sinha, Z.~Wang, and D.~Wu, ``Good equidistant codes constructed from certain
  combinatorial designs,'' \emph{Discrete Math.}, vol. 308, pp. 4205--4211,
  2008.

\bibitem{Delsarte}
P.~Delsarte, ``Four fundamental parameters of a code and their combinatorial
  signifcance,,'' \emph{Inform. and Control}, vol.~23, pp. 407--438, 1973.

\end{thebibliography}

\end{document}